\documentclass[12pt, showpacs, showkeys, nofootinbib]{article}

\usepackage{amsmath, amscd, amssymb, mathrsfs, tikz-cd, natbib, verbatim, amsthm}
\newcommand{\pa} [2] {\frac{\partial #1}{\partial #2}} 

\newtheorem{theorem}{Theorem}[section]

\theoremstyle{proposition}
\newtheorem{proposition}{Proposition}[section]

\theoremstyle{definition}
\newtheorem{definition}{Definition}[section]

\bibliographystyle{plainnat}

\begin{document}

\title{A global version of G\"unther's polysymplectic formalism using vertical projections}
\author{Tom McClain\footnote{Email: mcclaint@wlu.edu}
 \\
Department of Physics and Engineering \\ Washington and Lee University, Lexington Virginia 24450}

\maketitle

\begin{abstract}
I construct a global version of the local polysymplectic approach to covariant Hamiltonian field theory pioneered by C. G\"unther. Beginning with the geometric framework of the theory, I specialize to vertical vector fields to construct the (poly)symplectic structures, derive Hamilton's field equations, and construct a generalized Poisson structure. I then examine a few key examples to determine the nature of the necessary vertical projections and find that the theory seems to provide the geometric analog of the canonical transformation approach to covariant Hamiltonian field theory advanced by Struckmeier and Redelbach. I conclude with a few remarks about possible applications of this framework to the geometric quantization of classical field theories.
\end{abstract}

\section{Introduction}

The study of covariant Hamiltonian field theory goes back at least to the pioneering works of Dedonder \cite{de1930theorie} and Weyl \cite{weyl1935geodesic} in the 1930s. Geometric approaches to covariant Hamiltonian field theory really began to be pursued by Dedeker \cite{dedecker1977generalization} and Goldschmidt and Sternberg \cite{goldschmidt1973hamilton} in the 1970s. This work was then taken up by G\"unther in the 1980s, and given a set of clear axiomatic foundations \cite{gunther1987polysymplectic}. Though quite rigorous, G\"unther's original paper only dealt with the case in which the fields are sections of a trivial vector bundle (that is, the local version of the theory). One transition to a global theory for G\"unther's essential method was later provided by Carinena et al.\ in \cite{carinena1991multisymplectic}, and the method was more recently revived and re-formulated in more modern mathematical language by Munteanu et al.\ in \cite{munteanu2004gunther}. In addition to these more-or-less direct extensions, G\"unther's work has been quite influential in the move toward a differential geometric foundation for covariant Hamiltonian field theory generally, as noted in many papers detailing different perspectives (see \cite{sardanashvily1995generalized}, \cite{gotay1998momentum}, and \cite{kanatchikov1998canonical}, to name just a few key examples).

Though G\"unther's work was a major advance in the field, it was indeed only local in character, leaving an important gap that needed to be filled to provide a fully general treatment. In the extension of Carinena et al., the authors deviate substantially from the main path of G\"unther's paper, deeming the approach that more closely follows G\"unther's geometric foundation -- the vector-valued polysymplectic structure -- to be ``artificial and unsatisfactory" \cite{carinena1991multisymplectic} \footnote{A few comments made by those authors just before and after the quoted text shows that they had understood several of the main points of this work. However, they seem never to have presented the details of that work.}. Later expositions and continuations of G\"unther's work like that of \cite{munteanu2004gunther} also forgo its original geometric foundation in favor of other, more modern approaches like k-symplectic structures. Though \cite{blacker2019polysymplectic} provides a much needed update to the original treatment at a greater level of mathematical sophistication, to this author's knowledge nowhere is G\"unther's polysymplectic structure in its original form used to produce the full range of phenomena of covariant Hamiltonian field theory in the general, global case. The first goal of this paper is fill in this missing link.

In doing so, we will find the main critique of Carinena et al.\ in \cite{carinena1991multisymplectic} to be correct: the path from the starting fiber bundle through the polysymplectic structure to Hamilton's field equations, though feasible and well-defined, does indeed require extra structure beyond the original polysymplectic or standard multisymplectic approaches that make it seem artificial. Against this (somewhat subjective) demerit, we must weigh the payoff of this approach: a directly and naturally defined generalized Poisson structure that acts on generic functions on the phase space of the theory. Some of the common difficulties that surround the construction of generalized Poisson structures in covariant Hamiltonian field theory (see, for example, \cite{forger2003poisson}) are bypassed in this approach. Indeed, the outcome of the polysymplectic structure path seems to be the almost exact duplication of the natural results of the canonical transformation approach to covariant Hamiltonian field theory presented by Struckmeier and Redelbach in \cite{struckmeier2008covariant}. A second perspective on this paper, then, is that it presents a natural geometric setting for \cite{struckmeier2008covariant} that directly reproduces one of the main results of that paper. 

One of the longstanding hopes for covariant Hamiltonian field theory is that it will provide a path for the geometric quantization of classical fields  \cite{axelrod1991geometric}, \cite{kanatchikov1998quantization}, \cite{forger2003poisson} (but note also \cite{gotay1980functorial}) and/or especially challenging classical particle systems \cite{guillemin1982geometric},\cite{giachetta2002geometric}. From this point of view a third, more aspirational perspective on the paper is that it presents another geometric framework for covariant Hamiltonian field theory, one which is at least superficially rather different from many of the other modern approaches and which may therefore prove itself to be more applicable to the as-yet-unsolved problem of the geometric quantization of classical fields. I will make some preliminary remarks on this possibility in the conclusion. A more detailed exposition is planned in a forthcoming paper. 

Given these three possible motivations for readers, I have tried to make my treatment as accessible as possible to as wide an audience as possible. I have avoided specialized or esoteric language and operations unless absolutely necessary, and I have explained such non-standard operations or language as I find it essential to employ. Though I give coordinate independent definitions of all structures and operations, I also provide local coordinate descriptions of all results, and the paper can be followed by looking only at those local coordinate results (though in this case much of the motivation will be lost). It is my hope that a reader with a solid grasp of the fundamentals of differential geometry and elementary variational calculus will be able to follow the entire paper. \\

The main new result is the extension of G\"unther's polysymplectic formalism to non-flat space-times through the use of vertical projetion operators (Ehresmann connections). Eqs.\ \eqref{hamiltonsfieldequationsgeometric} and the more physically relevant \eqref{alternativehamiltons} are the most important results, but both are built on the more fundamental results of defining the tautological tensor in terms of a vertical projection operator and the derivation of the resulting polysymplectic structure; see definition \ref{tautologicaltensordefinition} and theorem \ref{symplectictensor} for more details. Of almost equal significance -- especially for quantization -- is the definition of a generalized Poisson structure based on this approach; see especially theorem \ref{poissonstructure} and the preceeding constructions. From the physical perspective, the application of this formalism to scalar and Maxwell fields are the most signficant results; please see sections \ref{scalarfield} and \ref{maxwellfield} for more details. \\

In a field with so many different conventions and notations, a few remarks are in order about which ones I will adopt throughout the paper. Given a fiber bundle $\epsilon : E \to M$, I will denote the jet bundle simply as $JE$ instead of $J^1 E$ (I will not use higher order jet bundles in this paper), and the vector bundle upon which $JE$ is modeled will be called $\vec JE$. Given a map $f: M \to N$, I will denote the tangent map $Tf : TM \to TN$ as $Tf$ rather than $f_*$. I will use the notation $\Gamma(M,E)$ to denote sections of a fiber bundle $E$ with projection map $\epsilon : E \to M$, and will often use notation like $E_x$ to denote the part of one geometric structure (in this case the fiber of $E$) that lies over a specific point of another (in this case the point $x \in M$). I make use of fibered local coordinate systems throughout the paper, but nothing of the geometry is changed by using non-fibered coordinate systems (only the local coordinate descriptions). The Einstein summation convention is employed throughout.

\section{A Tensor Algebra Approach to G\"unther's Formulation} \label{gunther's}
Before beginning to embellish G\"unther's original, flat-space approach, it makes sense to first restate it in the language of tensorial structures similar to those I will eventually use in the non-flat case. This will serve both as a reminder of the main line of reasoning in G\"unther's paper and its results, as well as an introduction to the tensorial approach I will use. None of the material in this section is at all new; it is only a restatement of some of the main symplectic structures of \cite{gunther1987polysymplectic} in slightly different terms. 

As with the geometric approach to Hamiltonian particle theory, in G\"unther's approach the parameter space of the theory is suppressed at the beginning. The approach begins directly with the space $Q$ in which the physical fields under consideration take their values. This space is taken to be a differentiable manifold with local coordinates $\{ q^I \}$. 

The next step is to form the phase space $P$. In geometric Hamiltonian particle theory, this would simply be the cotangent bundle $T^*Q$. However, in field theory the fact that the parameter space -- suppressed though it may be at this stage -- is no longer one dimensional means that care must be taken to produce enough momenta to encompass the covariant dynamics of the theory. To do this, the space is instead taken to be the bundle $P$ over $Q$ with standard fiber
$$ P_q = T^*_q Q \otimes \mathbb{R}^n $$
This bundle comes with a standard projection map $\pi : P \to Q$ and coordinates $\{ q^I, p^i_I \}$, where the capital letter index $I$ runs over field degrees of the freedom and the lower case index $i$ runs over parameter space dimensions. 

Given an element $v \in TP$, the tautological vector-valued one-form $\theta$ is defined point-wise by:
$$ \theta(v)_p := p \circ T_p \pi (v) $$
where $T \pi : TP \to TQ$ is the differential of the projection map $\pi$. In fibered coordinates $\{ q^I, p^i_I, v^I, v^i_I \}$ on $TP$, this map is given by
$$ \theta_p = p^i_I \ dq^I \otimes e_i$$
where the $\{ e_i \}$ are basis vectors for $\mathbb{R}^n$. The polysymplectic form $\omega$ is then given by
$$ \omega := - d \theta $$
The fact that this approach uses $\mathbb{R}^n$ -- a single, fixed vector space -- as the space in which the vector parts of $\theta$ take their values means that the exterior derivative here gives no trouble. See \cite{gunther1987polysymplectic} for details. In fibered coordinates on $TP$, this tensor is given by
$$ \omega = dq^I \wedge dp^i_I \otimes e_i $$ 

Given a Hamiltonian $H : P \to \mathbb{R}$ that encodes the dynamics of the theory, a map $\gamma : \mathbb{R}^n \to P$ represents a physically realizable field configuration if and only if it satisfies the condition
$$ \omega(\iota(T \gamma), -) = dH $$
over every point $p \in \text{Im} \gamma$. Here $T_r \gamma : T_r \mathbb{R}^n \to T_{\gamma(r)} P$ is the differential of the map $\gamma$, and the natural identification $\iota_r : T^*_r \mathbb{R}^n \to (\mathbb{R}^n)^*$ of each cotangent space $T^*_r \mathbb{R}^n$ with the vector space $(\mathbb{R}^n)^*$ is used when contracting the differential of $\gamma$ with the polysymplectic form. In the fibered coordinates used above, this condition amounts to the two equations
$$ \pa{q^I}{x^i} = \pa{H}{p^i_I} \quad , \quad \pa{p^i_I}{x^i} = - \pa{H}{q^I} $$
which are Hamilton's field equations. 

Formulated in this manner, it is not hard to see that one way to extend G\"unther's approach to the case in which the parameter space $\Lambda$ is not simply $\mathbb{R}^n$ is to switch to a setup in which the map $\iota : T^*_\lambda \Lambda \to (\mathbb{R}^n)^*$ is some other isomorphism, and the fixed vector space $\mathbb{R}^n$ matches the dimension of the tangent spaces $T_\lambda \Lambda$. Giving the most natural possible structures to make this generalization possible is the goal of the next section.

\section{An Extension of G\"unther's Approach to the Non-Flat Case} \label{extension}

My extension of G\"unther's theory begins with a fiber bundle $\epsilon : E \to M$ with total space $E$, base manifold $M$, projection map $ \epsilon : E \to M$, and standard fiber $Q$. The base manifold $M$ represents the space-time of our theory (typically represented by $\mathbb{R}^4$, but which we will assume is merely locally diffeomorphic to $\mathbb{R}^n$ for generality), while the space $Q \simeq E_x := \epsilon^{-1}(x)$ represents the space in which the physical field takes its values (typically $ \mathbb{R}^N $). The total space $E$, therefore, represents possible pairings between space-time values and field values; in other words, all possible field configurations over space-time. In physical terms, $E$ is the extended configuration space of our theory. Local fibered coordinates on $E$ are given by
$$\{ x^\alpha, \phi^I \} : E \to \mathbb{R}^{n+N} \mid e \mapsto x^\alpha e_\alpha + \phi^I e_I $$

The next step is to construct the vector bundle 
$$ V := V E \otimes_E T^* M \simeq \vec J E $$
where the tensor product is really fiberwise between $V_e E$ and $T^*_{\epsilon(e)} M$ \footnote{This slightly abusive notation is relatively common in the physics literature on covariant Hamiltonian field theory and serves to highlight what we take to be the base space of the resulting fiber bundle; see for example \cite{giachetta1999covariant}. A more thorough description of this bundle is given below.}. This vector bundle is isomorphic to the linearized first jet bundle over $E$; that is, the vector bundle upon which the ordinary jet bundle $J E$ is modeled. The first jet bundle $J E$ (an affine bundle over $E$) is the foundation for most discussions of covariant Hamiltonian field theory (see \cite{forger2003poisson} for notation and details). More specifically, the bundle $V$ is defined to be the bundle over $E$ with fibers $V_e E \otimes T^*_{\epsilon(e)} M $, where $VE$ is the vertical bundle over $E$ (see, for example, \cite{forger2003poisson} and \cite{gotay1998momentum}) with its fibers defined by $ V_e E := \{ u \in T_e E \mid T_e \epsilon (u) = 0 \} $, and $\epsilon(e)$ is the base space point over which $e \in E$ lies. We note, since it will be relevant later, that this bundle can be re-interpreted as a vector bundle over $M$ rather than $E$. \\

The bundle that forms the foundation for the geometric structures of this paper is really the dual of this bundle:
\begin{equation} P := V^*E \otimes_E TM \label{defP} \end{equation}
This bundle has standard fiber $P_e := V^*_e E \otimes T_{\epsilon(e)} M$. Like the previous vector bundle, it can be interpreted either as a bundle over $E$ with projection map $\pi : P \to E$ or as a bundle over $M$ with projection map $ \epsilon \circ \pi : P \to M $. We will make extensive use of both of these projections in defining the structures of our theory and their action on physical fields.

$P$ represents possible combinations of three things: 1) a point in space-time, 2) the values of all components of a physical field, and 3) a conjugate momentum value to each field value for each dimension of space-time. In physical terms, $P$ is the extended phase space of our theory. Local coordinates on $P$ that are compatible with a coordinate system $\{ x^\alpha, \phi^I \}$ on $E$ are given by 
$$\{ x^\alpha, \phi^I, \pi^\alpha_I \} : P \to \mathbb{R}^{n+N+nN} \mid p \mapsto x^\alpha e_a + \phi^I e_I + \pi^\alpha_I d \phi^I \otimes \pa{}{x^\alpha}$$
Note that this space is almost identical to the one considered by G\"unther in \cite{gunther1987polysymplectic}, except that 1) the base space $M$ (which I called $\Lambda$ in the section \ref{gunther's}) is not suppressed and 2) the poly-momenta take their vector-values in each $T_m M$, rather than a single copy of $\mathbb{R}^n$. These two seemingly innocuous changes will turn out to make a great deal of difference. \\

To define the tautological tensor $\theta : TM \to T^*P$ (a vector-valued one-form, as in section \ref{gunther's}) it is necessary to first consider an element of the space $T_p P \otimes T^*_{\epsilon \circ \pi(p)}M$. Given such an element $u = v \otimes \alpha$, with $v \in T_p P$ and $\alpha \in T_{\epsilon \circ \pi (p)}^*M$, the goal is to act upon it with the point $p$, considered as an element of the space $P_e$ -- that is, as a map $p : VP \otimes T^*M \to \mathbb{R}$ -- as in section \ref{gunther's}. Certainly the map $p$ cannot be applied to $u$ without modification. The first step, as before, is to apply $T_p \pi : T_p P \to T_{\pi(p)} E$ by extending it to the tensor product $TP \otimes T^*M$. I will take the liberty of calling this extended mapping $T_p \pi : T_p P \otimes T^*_{\epsilon \circ \pi(p)} M \to T_{\pi(p)} E \otimes T^*_{\epsilon \circ \pi(p)} M$ by the same name. (Since push-forwards by projection maps and identity maps on vector spaces are linear, the universal property of the tensor product \cite{birkhoff1965modern} guarantees that the map $ T \pi : T P \to TE $ can be uniquely extended to $TP \otimes T^*M$.) But the result is an element of $T_{\pi(p)} E \otimes T^*_{\epsilon \circ \pi(p)} M $, not $V_{\pi(p)}E \otimes T^*_{\epsilon \circ \pi(p)} M$ as desired, a natural consequence of the fact that I have not suppressed the parameter space $M$ as in section \ref{gunther's}. To map this element $u$ appropriately requires additional structure, namely a vertical projection (more abstractly called an Ehresmann connection) on TE. This is a vector bundle homomorphism (linear map) $V_E: TE \to VE$ such that 1) $VE = \text{Im} \ V_E$ and 2) $V_E^2 = V_E$. Given such a map, $\theta$ can be defined as follows:
\begin{definition} \label{tautologicaltensordefinition} Given an appropriately constructed fiber bundle $P$ as in eq.\ \eqref{defP} and a vertical projection operator $V_E : TE \to VE$ as defined above, the tautological tensor associated with $P$ and $V_E$ is defined pointwise by $$ \theta_p(u) := p \circ V_E \circ T \pi (u) $$ \end{definition}

Since any vertical projection operator is fiberwise linear, $\theta$ is naturally a linear map from $TP \otimes T^*M \to \mathbb{R}$ and therefore a section of $T^*P \otimes TM$. The exact coordinate representation of $\theta$ depends upon the vertical projection $V_E$. Choosing an arbitrary set of local fibered coordinates for $E$ gives $V_E : TE \to VE \mid u^\alpha \pa{}{x^\alpha} + u^I \pa{}{\phi^I} \mapsto (V^I_\alpha u^\alpha + u^I) \pa{}{\phi^I} $, so the result for $\theta$ is
$$ \theta_p = \pi^\alpha_I d \phi^I \otimes \pa{}{x^\alpha} + \pi^\alpha_I V^I_\beta dx^\beta \otimes \pa{}{x^\alpha}$$
where the $V^I_\alpha$ are the (arbitrary) coefficients of the vertical projection $V_E$. In the case that these coefficients are $0$ (as is natural in the case where $E = \mathbb{R}^n \times Q$) the result is exactly the same as in section \ref{gunther's}:
$$ \theta_p = \pi^\alpha_I d \phi^I \otimes \pa{}{x^\alpha} $$
Local coordinates in which $\theta = \pi^\alpha_I d \phi^I \otimes \pa{}{x^\alpha}$ are often called canonical coordinates\footnote{Though this terminology matches that of many works on symplectic geometry, it does not match \cite{gunther1987polysymplectic}. As this condition is very strong I will not make much use of it in the examples.}. Since a space need not admit any flat vertical projection, these coordinates are not guaranteed to exist even locally. \\

Since $\theta$ no longer takes its values in a single vector space but instead in the many tangent spaces $T_m M$, it does not have a canonical exterior derivative. Indeed, given its rather strange pedigree, defining any exterior derivative on $\theta$ is somewhat challenging. (It is not even a vector-valued exterior form in the usual sense, since our vector fields are sections of $TM$, rather than $P$ or $TP$). Based on the fact that the parameter space was suppressed throughout section \ref{gunther's}, we expect that the polysymplectic structure should be vertical in some sense. Indeed, something similar to G\"unther's basic approach can be made to work with a few modifications. 

Given any two vector fields $u, \, v \in \Gamma(P,TP)$ and any one-form $\beta \in \Gamma(M,T^*M)$, define the (non-tensorial) map $\omega$ via 
\begin{definition} \label{symplecticstructuredefinition} Given a tautological tensor $\theta$ as in definition \ref{tautologicaltensordefinition}, the associated symplectic structure $\omega$ is defined by
$$ \omega(u, v, \beta) := - d(\theta \lrcorner \beta) (u,v) $$
where $\lrcorner$ denotes the contraction of the contravariant part of the tensor on the left with the covariant components of the form on the right. \\
\end{definition}
More precisely, if we choose to represent $\theta(p) = \alpha \otimes m$, where $\alpha \in V^*_p P \subset T^*_p P$ (with $VP$ defined with the projection map $\epsilon \circ \pi$; i.e., as a bundle over $M$) and $m \in T_{\epsilon \circ \pi(p)} M$, then $\theta \lrcorner \beta (p) := \alpha(p) \beta_{\epsilon \circ \pi(p)} (m)$. The result is a linear map $\theta \lrcorner \beta : VP \subset TP \to \mathbb{R}$, and is therefore a section of $V^*P \subset T^*P$.

Though this is a well-defined map, it is not a tensor: the result of $d (\theta \lrcorner \beta)$ depends upon the particular one-form $\beta$, whereas a polysymplectic structure should be multi-linear. But looking at things in local coordinates makes it clear that defining $\omega$ only on vertical vector fields makes this structure unique; indeed, the domain of  $\theta \lrcorner \beta$ was already restricted to vertical vectors fields, so this restriction on $\omega$ is actually quite natural. These considerations lead to the following:
\begin{theorem} \label{symplectictensor}
When its domain is restricted to vertical vectors, the map $\omega$ of definition \ref{symplecticstructuredefinition} is actually a tensor, with coordinate representation $$ \omega = d \phi^I \wedge d \pi^\alpha_I \otimes \pa{}{x^\alpha} $$ in any local fibered coordinate system, independent of the vertical projection operator $V_E$ used to define $\theta$.
\end{theorem}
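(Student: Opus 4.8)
The plan is to carry out the whole argument in an arbitrary local fibered coordinate system $\{ x^\alpha, \phi^I, \pi^\alpha_I \}$ on $P$, starting from the coordinate formula for $\theta$ recorded just after definition~\ref{tautologicaltensordefinition}, and to show that after exterior differentiation and restriction to vertical vectors exactly one term survives. First I would fix a one-form $\beta = \beta_\mu \, dx^\mu$ on $M$; pulled back along $\epsilon \circ \pi$ its components $\beta_\mu$ are functions of the $x^\alpha$ alone. Contracting $\beta$ against the $TM$-valued part of $\theta_p = \pi^\alpha_I \, d\phi^I \otimes \partial/\partial x^\alpha + \pi^\alpha_I V^I_\gamma \, dx^\gamma \otimes \partial/\partial x^\alpha$ then yields the genuine one-form $\theta \lrcorner \beta = \pi^\alpha_I \beta_\alpha \, d\phi^I + \pi^\alpha_I V^I_\gamma \beta_\alpha \, dx^\gamma$ on $P$.

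Next I would apply the ordinary exterior derivative on $P$ and sort the resulting two-form according to whether each monomial contains a factor $dx^\mu$. The term $\beta_\alpha \, d\pi^\alpha_I \wedge d\phi^I$ (coming from differentiating $\pi^\alpha_I$ in $d(\pi^\alpha_I \beta_\alpha) \wedge d\phi^I$) is the only one with no such factor; every remaining contribution — from $d\beta_\alpha$, from differentiating $\pi^\alpha_I V^I_\gamma \beta_\alpha$ in any direction before wedging with $dx^\gamma$, and so on — carries at least one $dx^\mu$. Since the vertical subbundle $VP$ (taken with respect to $\epsilon \circ \pi$) is the kernel of $T(\epsilon \circ \pi)$, spanned locally by $\partial/\partial \phi^I$ and $\partial/\partial \pi^\alpha_I$, every $dx^\mu$ annihilates vertical vectors; hence for vertical $u, v$ we obtain $\omega(u,v,\beta) = -\,d(\theta \lrcorner \beta)(u,v) = \beta_\alpha \, (d\phi^I \wedge d\pi^\alpha_I)(u,v)$. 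Reading $\beta_\alpha$ as the contraction of $\partial/\partial x^\alpha$ with $\beta$, this is exactly $\big( d\phi^I \wedge d\pi^\alpha_I \otimes \partial/\partial x^\alpha \big) \lrcorner \beta$ evaluated on $(u,v)$, which is the claimed local expression; the complete absence of the coefficients $V^I_\alpha$ gives independence of the vertical projection, and since the chart was arbitrary the formula is coordinate-independent.

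It remains to argue that this object is genuinely tensorial. Bilinearity and antisymmetry over $C^\infty(P)$ in the slots of $u, v$ are automatic, since $d(\theta \lrcorner \beta)$ is an honest two-form and the vertical vector fields form a $C^\infty(P)$-submodule of $\Gamma(P, TP)$, so their restriction stays $C^\infty(P)$-bilinear. The only subtle point is the $\beta$-slot: exterior differentiation does not commute with multiplication by functions, so a priori $\omega(u,v,f\beta) \neq f\,\omega(u,v,\beta)$. But $d(\theta \lrcorner (f\beta)) - f\,d(\theta \lrcorner \beta) = df \wedge (\theta \lrcorner \beta)$, and because $f$ comes from $M$ we have $df = (\partial f/\partial x^\mu)\, dx^\mu$, so this correction is a sum of monomials each with a $dx^\mu$ factor and therefore vanishes on vertical pairs. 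Consequently, once the domain is restricted to $VP$, the dependence on $\beta$ is pointwise and $C^\infty(M)$-linear, and $\omega$ is a well-defined section of $\Lambda^2 V^*P \otimes TM$ with the stated form. I expect this last step — pinning down exactly why the non-tensorial intermediate $\theta \lrcorner \beta$ together with the non-natural operator $d$ nonetheless assemble into a tensor after the vertical restriction — to be the main conceptual obstacle; the term-sorting computation itself is routine once one tracks which monomials carry a horizontal leg.
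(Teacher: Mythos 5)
Your proof is correct and follows essentially the same route as the paper's: an arbitrary fibered-coordinate computation of $-d(\theta \lrcorner \beta)$ in which every term involving the projection coefficients $V^I_\alpha$ (and every other unwanted term) carries a $dx^\mu$ factor and hence dies on vertical vectors, leaving only $\beta_\alpha\, d\phi^I \wedge d\pi^\alpha_I$. Your extra step making the $C^\infty(M)$-linearity in the $\beta$-slot explicit via $d(\theta \lrcorner (f\beta)) - f\,d(\theta \lrcorner \beta) = df \wedge (\theta \lrcorner \beta)$ is a welcome sharpening of what the paper disposes of simply by observing that the final coordinate expression is multilinear, but it is the same argument in substance.
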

\begin{proof}
We use arbitrary local fibered coordinates to prove both assertions. In coordinates in which a point $p \in P = (x, \phi, \pi) = x^\alpha e_\alpha + \phi^I e+I + \pi^\alpha_I d \phi^I \otimes \pa{}{x^\alpha}$, the tautological tensor $\theta = \pi^\alpha_I d \phi^I \otimes \pa{}{x^\alpha} + \pi^\alpha_I V^I_\beta(x,e) dx^\beta \otimes \pa{}{x^\alpha}$, and our one form $\beta = \beta_\alpha(x) dx^\alpha$, we have
\begin{multline*}
- d (\theta \lrcorner \beta) = - d (\beta_\alpha \pi^\alpha_I d \phi^I + \beta_\alpha \pi^\alpha_I V^I_\beta dx^\beta) = \\\beta_\alpha d \phi^I \wedge d \pi^\alpha_I + \beta_\alpha V^I_\beta dx^\beta \wedge d \pi^\alpha_I + (\pi^\alpha_J \pa{\beta_\alpha}{x^\gamma} - \pi^\alpha_I \beta_\alpha \pa{V^I_\gamma}{\phi^J}) d \phi^J \wedge dx^\gamma + ( \pi^\alpha_I \pa{\beta_\alpha}{x^\gamma} V^I_\beta + \pi^\alpha_I \beta_\alpha \pa{V^I_\beta}{x^\gamma}) dx^\beta \wedge dx^\gamma
\end{multline*}
so inputting two vertical vector fields $u, \, v \in \Gamma(P,VP)$, where $u = u^I \pa{}{\phi^I} + u^\alpha_I \pa{}{\pi^\alpha_I}$ and $v = v^I \pa{}{\phi^I} + v^\alpha_I \pa{}{\pi^\alpha_I}$ gives
$$ - d (\theta \lrcorner \beta) (u,v) = \beta_\alpha (u^\alpha_I v^I - u^I v^\alpha_I) $$
which is multilinear in $\beta$, $u$, and $v$. Therefore, in any local fibered coordinates, the foregoing means that we can write the resulting tensor $\omega_p : V_p P \times V_p P \times T_{\pi \circ \epsilon(p)}^*M \to \mathbb{R}$ as 
$$ \omega = d \phi^I \wedge d \pi^\alpha_I \otimes \pa{}{x^\alpha} $$
\end{proof}

This is directly analogous to what was found in section \ref{gunther's}. Note that this local result is independent of the particular vertical projection $V_E$ used to define the tautological tensor of in definition \ref{tautologicaltensordefinition}. In the end, the apparent complication of this additional geometric structure does not affect the local physical result. \\

To make contact with physics, the last step is to introduce a Hamiltonian function $H : P \to \mathbb{R}$ that encodes the dynamics of the theory and stipulate how it is related to the geometric structures we have introduced. (Note that $H$ can now depend explicitly on the spacetime coordinates $x^\alpha$, since the base space $M$ is incorporated into $P$, unlike in section \ref{gunther's}.) It would be nice to proceed exactly as in section \ref{gunther's}: given a section $\gamma : M \to P$ of $P$, take the differential $T_m \gamma : T_m M \to T_{\gamma(m)} P$, contract it with the polysymplectic structure $\omega$, and demand that it match the exterior derivative of $H$. However, this does not work because the domain of $\omega$ is restricted to vertical vectors. Once again, the straightforward solution is to introduce a vertical projection $V_P$ -- this time on $TP$ instead of $TE$ -- and to extend it to the tensor product $TP \otimes T^*M$ as above. 
Given the polysymplectic structure defined above, physical solution sections $\gamma$ might be identified as those that satisfy 
\begin{equation} 
\omega ( V_P \circ T \gamma, v) = d H (v)
\label{hamiltonsfieldequationsgeometric}
\end{equation}
at every point $p \in \text{Im} \gamma$, for all vertical vectors fields $v \in \Gamma(P,VP)$. Here, $H : P \to \mathbb{R}$ is the covariant Hamiltonian function. Just as in section \ref{gunther's}, this function encodes the dynamical properties of the physical system under consideration. The exact coordinate representation of these sections now depends upon the vertical projection $V_P$. \\

With all this structure in place, it is worth noting that what looks like two new structures -- the vertical projections $V_E$ and $V_P$ -- need really only be one new structure. Every vertical projection $V_P$ on $TP$ determines a unique compatible vertical projection $V_E$ on $TE$ through the requirement that the following diagram commute:
\begin{center}
\begin{tikzcd}
TP \arrow[d, "T \pi"] \arrow[r, "V_P"] & \arrow[d, "T \pi"] VP \\
TE \arrow[r, "V_E"] & VE \\
\end{tikzcd}
\end{center}
In compatible local fibered coordinates on $P$ and $E$ in which $V_P : TP \to VP \mid u^\alpha \pa{}{x^\alpha} + u^I \pa{}{\phi^I} + u^\alpha_I \pa{}{\pi^\alpha_I} \mapsto (V^I_\alpha u^\alpha + u^I) \pa{}{\phi^I} + (V^\alpha_{I \beta} u^\beta + u^\alpha_I) \pa{}{\pi^\alpha_I}$, this just amounts to the requirement that $V^{I}_{\alpha \ E} = V^{I}_{\alpha \ P}$. \\

Keeping in mind the fact that there is only a single vertical projection $V_P = V$ to be specified and choosing local fibered coordinates in which $V : TP \to VP \mid u^\alpha \pa{}{x^\alpha} + u^I \pa{}{\phi^I} + u^\alpha_I \pa{}{\pi^\alpha_I} \mapsto (V^I_\alpha u^\alpha + u^I) \pa{}{\phi^I} + (V^\alpha_{I \beta} u^\beta + u^\alpha_I) \pa{}{\pi^\alpha_I}$, eq.\ \eqref{hamiltonsfieldequationsgeometric} amounts to the local fibered coordinate equations
\begin{equation} \pa{\gamma^I}{x^\alpha} + V^I_\alpha = \pa{H}{\pi^\alpha_I} \label{localhfe1} \end{equation}
\begin{equation} \pa{\gamma^\alpha_I}{x^\alpha} + V^\alpha_{I \alpha} = - \pa{H}{\phi^I} \label{localhfe2} \end{equation}
so in the case that $E = \mathbb{R}^n \times Q$ and the vertical projection is the natural flat one the result is exactly the same as in section \ref{gunther's}:
$$ \pa{\gamma^I}{x^\alpha} = \pa{H}{\pi^\alpha_I} $$
$$ \pa{\gamma^\alpha_I}{x^\alpha} = - \pa{H}{\phi^I} $$
which are Hamilton's field equations.

However, if the vertical projection is not naturally flat -- and generically it cannot be -- then we do not recover Hamilton's field equations unless we use a modified Hamiltonian function that now depends upon the vertical projection $V$ as well as the system under consideration:
$$ H' := H - \pi^\alpha_I V^I_\alpha + \phi^I V^\alpha_{I \alpha} $$
This seems a very artificial way to produce the physically correct field equations. Moreover, it is not generically possible to perform this modifiction globally. Luckily, there is another, more elegant way to circumvent this problem in the cases of greatest physical interest. I will describe this modification to G\"unther's approach in section \ref{modifications}. But already we have enough of a foundation in place to discuss generalized Poisson structures. \\

\section{Generalized Poisson Structures} \label{poisson}
Poisson brackets are not strictly necessary for doing covariant Hamiltonian field theory, as evidenced by their absence from most versions \cite{forger2003poisson}. However, as Hamiltonian vector fields and Poisson brackets play an essential role in the most straightforward geometric quantization methods for Hamiltonian particle theory, they may be essential to formulating a quantum counterpart to covariant Hamiltonian field theory. Therefore, I take the attitude that the formulation of a field theoretic counterpart to the Poisson bracket of Hamiltonian particle theory is an important component of any covariant Hamiltonian field theory. 

To define a second symplectic tensor that is the counterpart to the Poisson structure in geometric Hamiltonian particle theory, I begin by noting that it is possible to associate with any function $f$ on $P$ a \emph{family} of tensors $\mathscr{S}_f$, where each section $s_f \in \mathscr{S}_f \subset \Gamma(P, VP \otimes T^*M) $ must satisfy:
$$ \omega(s_f, v) = v^\alpha_I s^I_\alpha - v^I s^\alpha_{I \alpha} = df(v) $$
and the relation is expected to hold for all vertical vector fields $v : P \to VP$. In coordinates, these sections have components obeying the relations
$$ s^I_\alpha = \pa{f}{\pi^\alpha_I}, \quad s^\alpha_{I \alpha} = - \pa{f}{\phi^I}$$
There is a family of sections rather than a single section associated with each function $f$ because the second relation specifies only the trace of the second set of coordinate functions $s^\alpha_{I \beta}$, rather than specifying every coordinate function uniquely. In local fibered coordinates, the sections $s_f$ look like:
$$ s_f = \pa{f}{\pi^\alpha_I} \ dx^\alpha \otimes \pa{}{\phi^I} - \pa{f}{\phi^I} \ dx^\alpha \otimes \pa{}{\pi^\alpha_I} + s^{\quad \ \beta}_{TF \, I \alpha} \ dx^\alpha \otimes \pa{}{\pi^\beta_I} $$
where the components $s_{TF}$ (TF stands for ``trace-free'') are arbitrary other than that they must obey the condition $s^{\quad \ \alpha}_{TF \, I \alpha} = 0$.

Though all of this follows more-or-less naturally from considerations of verticality and the polysymplectic structure, it would be more useful to have a \emph{unique} section associated with each function on the phase space $P$. This we can do by the following:

\begin{theorem} \label{poissonsection}
For every function $f \in C^\infty(P)$, it is possible to define a unique section $\sigma_f \in \mathscr{S}_f$ by requiring that, for all functions $g \in C^\infty(P)$, we have
$$ \sigma_f(dg) = - \sigma_g(df) $$
\end{theorem}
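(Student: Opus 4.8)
\emph{Proof strategy.} I would prove this by a direct computation in local fibered coordinates, extracting the only freedom available in $\mathscr{S}_f$ and showing that the antisymmetry condition removes it. Fix arbitrary fibered coordinates $\{x^\alpha,\phi^I,\pi^\alpha_I\}$. As recalled above, membership $s_f\in\mathscr{S}_f$ fixes every component of $s_f$ except its trace-free part $s_{TF}$, which I will denote $t_f$; the whole task is to pin down $t_f$ from the requirement $\sigma_f(dg)=-\sigma_g(df)$. Pairing the $VP$-part of $\sigma_f$ against $dg$ and keeping the $T^*M$-leg, one finds that the $dx^\alpha$-component of $\sigma_f(dg)$ is $\pa{f}{\pi^\alpha_I}\pa{g}{\phi^I}$, plus the contribution of the part of $\sigma_f$ fixed by $\mathscr{S}_f$, plus $t^\beta_{f\,I\alpha}\pa{g}{\pi^\beta_I}$; the expression for $\sigma_g(df)$ is obtained by exchanging $f$ and $g$. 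The key preliminary observation is that, on forming $\sigma_f(dg)+\sigma_g(df)$, the contributions of the fixed parts cancel identically, so the condition of the theorem collapses to
$$ t^\beta_{f\,I\alpha}\,\pa{g}{\pi^\beta_I} + t^\beta_{g\,I\alpha}\,\pa{f}{\pi^\beta_I} = 0, $$
to hold for all $f$ and $g$.

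To solve this I would feed the fibered coordinate functions into $g$ one at a time. Taking $g=x^\mu$ and $g=\phi^J$, for which $\partial g/\partial\pi$ vanishes, forces $t_{x^\mu}$ and $t_{\phi^J}$ to annihilate $\partial f/\partial\pi$ for arbitrary $f$, hence to vanish. Taking $g=\pi^\mu_J$ then expresses $t_f$, for an arbitrary $f$, as a fixed linear combination of the (a priori unknown) tensors $t_{\pi^\mu_J}$. Running the coordinate functions through the $f$-slot as well and using the $f\leftrightarrow g$ antisymmetry of the relation, I would obtain an algebraic identity showing that the array built from the $t_{\pi^\mu_J}$ is antisymmetric under an appropriate exchange of index pairs; combining this with the fact that each $t_{\pi^\mu_J}$ is traceless in its two space-time indices, I expect to conclude that this array vanishes, so $t_f=0$ for every $f$, and $\sigma_f$ is the unique element of $\mathscr{S}_f$ with vanishing trace-free part. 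Since $\sigma_f$ is then characterized by the tensorial conditions defining $\mathscr{S}_f$ together with the tensorial vanishing of its trace-free part, it is a globally well-defined section of $VP\otimes T^*M$, independent of the coordinates; and a final substitution confirms that it does satisfy $\sigma_f(dg)=-\sigma_g(df)$ for all $g$, giving existence.

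The step I expect to be the main obstacle, and the one demanding the most care, is this last algebraic move: showing that antisymmetry together with tracelessness leaves no residual freedom in the $t_{\pi^\mu_J}$. It is a purely index-theoretic linear-algebra statement, but it is exactly where the precise normalization of the trace condition defining $\mathscr{S}_f$ (the coefficient relating $s^\alpha{}_{I\alpha}$ to $\partial f/\partial\phi^I$) enters essentially, so the bookkeeping must be carried out with some care. Everything else in the argument is routine.
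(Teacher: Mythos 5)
Your reduction and your elimination scheme faithfully parallel the paper's proof (the paper parameterizes the whole $\pi$-component by an arbitrary $F^\beta_{J\alpha}$ and shifts by $\delta^\beta_\alpha\pa{f}{\phi^J}$, while you keep only the trace-free freedom, but the resulting homogeneous condition is the same). However, the step you yourself single out as the crux is not merely delicate -- it is false, so the proposal cannot be completed along this route. Write $B^{(\mu J)(\nu K)}{}_{\alpha}:=t_{\pi^\mu_J}{}^{\nu}_{K\alpha}$; your steps give $t_{x^\mu}=t_{\phi^J}=0$, $t_f{}^{\nu}_{K\alpha}=-B^{(\nu K)(\beta I)}{}_{\alpha}\,\pa{f}{\pi^\beta_I}$, pair-antisymmetry of $B$, and the vanishing of the traces $B^{(\alpha K)(\beta I)}{}_{\alpha}$ and $B^{(\beta I)(\alpha K)}{}_{\alpha}$. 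These conditions do not force $B=0$: for a single scalar field over a base of dimension $n\ge 3$ take $B^{12}{}_{3}=-B^{21}{}_{3}=1$ and all other components zero; this is antisymmetric and every trace above vanishes. Worse, the induced assignment $t_f{}^{\nu}{}_{\alpha}=-B^{\nu\beta}{}_{\alpha}\pa{f}{\pi^\beta}$ is trace-free and satisfies the homogeneous identity for \emph{every} pair $f,g$, since
$$ t_f{}^{\beta}_{I\alpha}\pa{g}{\pi^\beta_I}+t_g{}^{\beta}_{I\alpha}\pa{f}{\pi^\beta_I}=-B^{(\beta I)(\gamma K)}{}_{\alpha}\Big(\pa{f}{\pi^\gamma_K}\pa{g}{\pi^\beta_I}+\pa{g}{\pi^\gamma_K}\pa{f}{\pi^\beta_I}\Big) $$
contracts an antisymmetric array with a symmetric one. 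Hence $\sigma_f+t_f$ is a second family in $\mathscr{S}_f$ obeying $\sigma_f(dg)=-\sigma_g(df)$ for all $f,g$, and no amount of further index bookkeeping can extract uniqueness from the antisymmetry requirement alone.

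For comparison, the paper's proof disposes of exactly this residual freedom by asserting that $\tilde F^{\beta}_{I\alpha}\pa{g}{\pi^\beta_I}$ and $\tilde G^{\beta}_{I\alpha}\pa{f}{\pi^\beta_I}$ must vanish separately ``since $\tilde F^{\beta}_{I\alpha}$ and $\pa{f}{\pi^\beta_I}$ are independent''; your more explicit elimination shows that this separation is precisely what needs proof, and the array $B$ above shows it cannot follow from the stated hypothesis -- some additional input (in effect, the requirement that the assignment come from a single tensor $\Pi$ with no $\pa{}{\pi}\otimes\pa{}{\pi}$ components, which is what eq.\ \eqref{poissonstructurecoordinates} builds in by hand) is needed to pin $\sigma_f$ down. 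A smaller point: the cancellation of the fixed parts in $\sigma_f(dg)+\sigma_g(df)$, which you invoke to get the reduced condition, holds only for the normalization in which the fixed $\pi$-part is $-\pa{f}{\phi^I}\delta^\beta_\alpha$ (the paper's displayed parameterization of $s_f$); with the literal trace condition $s^\alpha_{I\alpha}=-\pa{f}{\phi^I}$ a symmetric remainder proportional to $1-1/\dim M$ survives, so the normalization you flag actually enters already at the reduction step, not only in the final algebra.
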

\begin{proof}
Let us choose local coordinates in which $\sigma_f = \pa{f}{\pi^\alpha_I} \ \pa{}{\phi^I} \otimes dx^\alpha + F^{\beta}_{J \alpha} \ \pa{}{\pi^\beta_J} \otimes dx^\alpha$ and $\sigma_g = \pa{g}{\pi^\alpha_I} \ \pa{}{\phi^I} \otimes dx^\alpha + G^{\beta}_{J \alpha} \ \pa{}{\pi^\beta_J} \otimes dx^\alpha$ and let us define at each point $p \in P$ new components $\tilde F^\beta_{J \alpha} = F^\beta_{J \alpha} + \delta^\beta_\alpha \pa{f}{\phi^J}$ and $\tilde G^\beta_{J \alpha} = G^\beta_{J \alpha} + \delta^\beta_\alpha \pa{g}{\phi^J}$, where the $\delta^\beta_\alpha$ are the components of the identity matrix of dimension $\dim M$. In these local fibered coordinates, at each point $p \in P$ the requirement $ \sigma_f(dg) = - \sigma_g(df) $ reads
\begin{multline*}
\bigg( \pa{f}{\pi^\alpha_I} \ \pa{g}{\phi^I} - \delta^\beta_\alpha \pa{f}{\phi^J}\ \pa{g}{\pi^\beta_J} + \tilde F^{\beta}_{J \alpha} \ \pa{g}{\pi^\beta_J}
 + \pa{g}{\pi^\alpha_I} \ \pa{f}{\phi^I} - \delta^\beta_\alpha \pa{g}{\phi^J}\ \pa{f}{\pi^\beta_J} + \tilde G^{\beta}_{J \alpha} \ \pa{f}{\pi^\beta_J} \bigg) dx^\alpha = 0
\end{multline*}
Relabeling indices, eliminating delta-functions, and canceling like terms yields the requirement
$$ \bigg( \tilde F^{\beta}_{I \alpha} \ \pa{g}{\pi^\beta_I} + \tilde G^{\beta}_{I \alpha} \ \pa{f}{\pi^\beta_I} \bigg) dx^\alpha = 0 $$
Since the relation is to hold for all $g \in C^\infty(P)$ and for each component $\alpha$, and since $\tilde F^{\beta}_{I \alpha}$ and $\pa{f}{\pi^\beta_I}$ are independent, the requirement can be reformulated as two independent conditions
$$ \tilde F^{\beta}_{I \alpha} \ \pa{g}{\pi^\beta_I} = 0 = \tilde G^{\beta}_{I \alpha} \ \pa{f}{\pi^\beta_I} $$
for each $\alpha$. The first condition then requires that  
$$ \tilde F^{\beta}_{I \alpha} = 0 $$
regardless of the derivatives $\pa{f}{\pi^\beta_I}$. (If the derivatives $\pa{f}{\pi^\beta_I}$ are all non-vanishing, we must have $\tilde G^{\beta}_{I \alpha} = 0 $ also, but this is consistent with reversing the roles of $f$ and $g$.) Therefore, each $\sigma_f$ is uniquely determined, point-by-point, to be
$$ \sigma_f = \pa{f}{\pi^\alpha_I} \ \pa{}{\phi^I} \otimes dx^\alpha - \delta^\beta_\alpha \pa{f}{\phi^J} \ \pa{}{\pi^\beta_J} \otimes dx^\alpha $$
Since every component is uniquely determined at every point, the section $\sigma_f$ is unique.
\end{proof}

Considering these more tractable sections leads to the following:

\begin{theorem} \label{poissonstructure}
At each point $p \in P$, there is a unique tensor $\Pi_p : T_{\epsilon \circ \pi(p)} M \times T^*_p P  \times T_p^*P \to \mathbb{R}$ such that, for all functions $f \in C^\infty(P)$, $\Pi_p(-,df,-) = \sigma_f$.
\end{theorem}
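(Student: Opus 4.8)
The plan is to build $\Pi_p$ directly from the sections $\sigma_f$ of Theorem~\ref{poissonsection}, by observing that the only feature of $f$ that $\sigma_f(p)$ detects is the covector $df|_p \in T^*_pP$, and that it detects it linearly. Concretely, the coordinate formula established in the proof of Theorem~\ref{poissonsection} reads $\sigma_f = \pa{f}{\pi^\alpha_I}\,\pa{}{\phi^I}\otimes dx^\alpha - \pa{f}{\phi^I}\,\pa{}{\pi^\alpha_I}\otimes dx^\alpha$, and its value at $p$ depends on $f$ only through the numbers $\pa{f}{\phi^I}(p)$ and $\pa{f}{\pi^\alpha_I}(p)$, which are coordinate components of $df|_p$. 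I would first record three elementary consequences: (i) if $df|_p = dg|_p$ then $\sigma_f(p) = \sigma_g(p)$, so $df|_p \mapsto \sigma_f(p)$ is a well-defined assignment; (ii) this assignment is linear, because partial differentiation is linear; and (iii) it is defined on \emph{all} of $T^*_pP$, since every $\xi \in T^*_pP$ equals $df|_p$ for some $f \in C^\infty(P)$: take a coordinate-linear representative on a chart around $p$ and truncate it by a bump function that equals $1$ near $p$. Together (i)--(iii) produce a genuine linear map $L_p : T^*_pP \to V_pP \otimes T^*_{\epsilon\circ\pi(p)}M$ given by $df|_p \mapsto \sigma_f(p)$.

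The second step is to ``uncurry'' $L_p$ into the asserted tensor. An element of $V_pP \otimes T^*_{\epsilon\circ\pi(p)}M$ is canonically a bilinear form on $T_{\epsilon\circ\pi(p)}M \times T^*_pP$: contract the $T^*M$-leg with a vector of $T_{\epsilon\circ\pi(p)}M$ and contract the $V_pP \subset T_pP$-leg with a covector of $T^*_pP$. Composing this identification with $L_p$ turns it into a trilinear map $\Pi_p : T_{\epsilon\circ\pi(p)}M \times T^*_pP \times T^*_pP \to \mathbb{R}$, i.e.\ a tensor, and by construction $\Pi_p(-,df,-) = L_p(df|_p) = \sigma_f$, which is the existence statement. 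Tracing the contractions through the formula for $\sigma_f$ gives, in any local fibered coordinate system, $\Pi_p = dx^\alpha \otimes \big( \pa{}{\pi^\alpha_I}\otimes\pa{}{\phi^I} - \pa{}{\phi^I}\otimes\pa{}{\pi^\alpha_I}\big)$, where the first slot eats the $T_{\epsilon\circ\pi(p)}M$-argument, the second eats $df$, and the third eats the remaining covector; coordinate-independence needs no separate check, since $L_p$ was defined intrinsically. For uniqueness, if $\Pi'_p$ is another tensor with $\Pi'_p(-,df,-) = \sigma_f$ for all $f$, then $(\Pi_p - \Pi'_p)(-,df,-) = 0$ for every $f$, hence $(\Pi_p - \Pi'_p)(-,\xi,-) = 0$ for every $\xi \in T^*_pP$ by (iii), and a multilinear form that vanishes whenever its middle argument ranges over a spanning set of a vector space must itself be zero, so $\Pi_p = \Pi'_p$.

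I do not expect a real obstacle: once Theorem~\ref{poissonsection} is available this is the familiar ``a pointwise-linear operation on differentials is a tensor'' argument. The two points deserving a little care are (a) keeping the three tensor slots and their variances straight (in particular, the $VP$-valued leg of $\sigma_f$ lives inside $T_pP$, so it genuinely pairs against the $T^*_pP$-arguments of $\Pi_p$; indeed $\Pi_p(w,df,-)$ is $V_pP$-valued for each fixed $w$ and $f$), and (b) supplying the bump-function argument behind the surjectivity in (iii). If one prefers to avoid the explicit coordinate formula, properties (i) and (ii) follow equally well from the pointwise nature of the construction in Theorem~\ref{poissonsection}, where $\sigma_f(p)$ is assembled purely out of $df|_p$ and $dg|_p$; I would mention this but run the argument through the coordinate formula, which is the shortest route.
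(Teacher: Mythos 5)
Your proposal is correct, and it takes a recognizably different route from the paper's proof. The paper argues entirely in coordinates: it writes down the most general tensor $\Pi \in T^*_{\epsilon\circ\pi(p)}M \otimes T_pP \otimes T_pP$ with all nine blocks of components, contracts with $df$, matches the result block-by-block against the coordinate formula for $\sigma_f$, and uses the arbitrariness of $f$ to force every component (including all the ``mixed'' ones involving $\pa{}{x^\beta}$) to its unique value, arriving at $\Pi_p = -\,\pa{}{\phi^I}\wedge\pa{}{\pi^\alpha_I}\otimes dx^\alpha$. You instead package the content of Theorem~\ref{poissonsection} as the statement that $f \mapsto \sigma_f(p)$ factors linearly through $df|_p$, promote this to a linear map $L_p : T^*_pP \to V_pP\otimes T^*_{\epsilon\circ\pi(p)}M$, and uncurry it into the trilinear $\Pi_p$; uniqueness then follows from the fact that every covector at $p$ is realized as $df|_p$ (your bump-function lemma), so two candidate tensors agreeing on all differentials agree on a spanning set of middle arguments. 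What your route buys is that it makes explicit the surjectivity fact the paper only gestures at with ``since $f$ is arbitrary,'' and it gives coordinate independence for free since $L_p$ is built from the intrinsically characterized $\sigma_f$; what the paper's computation buys is an explicit verification, component by component, that no extra (e.g.\ $\pa{}{x^\beta}$-valued) terms can appear, which is exactly the information your spanning argument encodes more abstractly. Both land on the same coordinate expression, differing only in the ordering convention for the three slots, so the two proofs are fully compatible.
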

\begin{proof}
In local fibered coordinates, a general tensor $\Pi \in T^*_{\epsilon \circ \pi(p)} M \otimes T_p P \otimes T_p P $ can be written
\begin{multline*} \Pi = \Pi^{\beta \gamma}_{\alpha} dx^\alpha \otimes \pa{}{x^\beta} \otimes \pa{}{x^\gamma} + \Pi^{\beta I}_{\alpha} dx^\alpha \otimes \pa{}{x^\beta} \otimes \pa{}{\phi^I} + \Pi^{\beta \gamma}_{\alpha I} dx^\alpha \otimes \pa{}{x^\beta} \otimes \pa{}{\pi^\gamma_I} \\
 +  \Pi^{I \beta}_\alpha dx^\alpha \otimes \pa{}{\phi^I} \otimes \pa{}{x^\beta}+ \Pi^{I J}_{\alpha} dx^\alpha \otimes \pa{}{\phi^I} \otimes \pa{}{\phi^J} +  \Pi^{I \beta}_{\alpha J} dx^\alpha \otimes \pa{}{\phi^I} \otimes \pa{}{\pi^\beta_J} \\
 + \Pi^{\beta \gamma}_{ \alpha I} dx^\alpha \otimes \pa{}{\pi^\beta_I} \otimes \pa{}{x^\gamma} + \Pi^{\beta J}_{\alpha I} dx^\alpha \otimes \pa{}{\pi^\beta_I} \otimes \pa{}{\phi^J} + \Pi^{\beta \gamma}_{\alpha I J} dx^\alpha \otimes \pa{}{\pi^\beta_I} \otimes \pa{}{\pi^\gamma_J} \end{multline*}
Contracting this tensor with $df$ gives
\begin{multline*} \Pi(-,df,-) = \Pi^{\beta \gamma}_{\alpha} \pa{f}{x^\beta} dx^\alpha \otimes \pa{}{x^\gamma} + \Pi^{\beta I}_{\alpha} \pa{f}{x^\beta} dx^\alpha \otimes \pa{}{\phi^I} + \Pi^{\beta \gamma}_{\alpha I} \pa{f}{x^\beta} dx^\alpha \otimes \pa{}{\pi^\gamma_I} \\
 +  \Pi^{I \beta}_\alpha \pa{f}{\phi^I} dx^\alpha \otimes \pa{}{x^\beta}+ \Pi^{I J}_{\alpha} \pa{f}{\phi^I} dx^\alpha \otimes \pa{}{\phi^J} +  \Pi^{I \beta}_{\alpha J} \pa{f}{\phi^I} dx^\alpha \otimes \pa{}{\pi^\beta_J} \\
 + \Pi^{\beta \gamma}_{\alpha I} \pa{f}{\pi^\beta_I} dx^\alpha \otimes \pa{}{x^\gamma} + \Pi^{\beta J}_{\alpha I} \pa{f}{\pi^\beta_I} dx^\alpha \otimes \pa{}{\phi^J} + \Pi^{\beta \gamma}_{\alpha I J} \pa{f}{\pi^\beta_I} dx^\alpha \otimes \pa{}{\pi^\gamma_J} \end{multline*}
Relabeling indices and comparing to $\sigma_f$ then requires that:
\begin{multline*}  \left( \Pi^{\gamma \beta}_\alpha \pa{f}{x^\gamma} + \Pi^{I \beta}_\alpha \pa{f}{\phi^I} + \Pi^{\gamma \beta}_{\alpha I} \pa{f}{\pi^\gamma_I} \right) dx^\alpha \otimes \pa{}{x^\beta} \\
 + \left( \Pi^{\beta I}_{\alpha} \pa{f}{x^\beta} + \Pi^{J I}_{\alpha} \pa{f}{\phi^J} + \Pi^{\beta I}_{\alpha J} \pa{f}{\pi^\beta_J} \right) dx^\alpha \otimes \pa{}{\phi^I} \\
+  \left( \Pi^{\gamma \beta}_{\alpha I} \pa{f}{x^\gamma} + \Pi^{J \beta}_{\alpha I} \pa{f}{\phi^J} + \Pi^{\gamma \beta}_{\alpha J I} \pa{f}{\pi^\gamma_J} \right) dx^\alpha \otimes \pa{}{\pi^\beta_I} \\
 = 0 \ dx^\alpha \otimes \pa{}{x^\beta} + \pa{f}{\pi^\alpha_I} dx^\alpha \otimes \pa{}{\phi^I} - \pa{f}{\phi^I} \delta^\beta_\alpha dx^\alpha \otimes \pa{}{\pi^\beta_I} \end{multline*}
Since $f$ is arbitrary, the first line requires that $\Pi^{\gamma \beta}_{\alpha} =  \Pi^{I \beta}_{\alpha} = \Pi^{\gamma \beta}_{\alpha I} = 0 $, the second further requires that $\Pi^{\beta I}_\alpha = \Pi^{J I}_{\alpha} =0 $ and that $ \Pi^{\beta I}_{\alpha J} = \delta^I_J \delta^\beta_\alpha $, and the third further requires that $\Pi^{\gamma \beta}_{\alpha I} = \Pi^{\gamma \beta}_{\alpha J I} = 0$ and that $ \Pi^{J \beta}_{\alpha I} = - \delta^J_I \delta^\beta_\alpha $.
Putting all this together gives
\begin{equation} \Pi_p = \left( - \pa{}{\phi^I} \otimes \pa{}{\pi^\alpha_I} + \pa{}{\pi^\alpha_I} \otimes \pa{}{\phi^I} \right) \otimes dx^\alpha = - \pa{}{\phi^I} \wedge \pa{}{\pi^\alpha_I} \otimes dx^\alpha \label{poissonstructurecoordinates} \end{equation}
with all tensor components uniquely defined.
\end{proof}

The primary significance of $\Pi$ is that it allows us to define a generalized Poisson bracket in a natural and invariant manner: for any two functions $f, \, g \in C^\infty(P)$, we define their generalized Poisson bracket $\{ f, \, g \}$ by
$$ \{ f, \, g \} := \Pi(-, df, dg) $$
In coordinates, this reads
$$ \{ f, \, g \} = \left( \pa{f}{\phi^I} \pa{g}{\pi^\alpha_I} - \pa{f}{\pi^\alpha_I} \pa{g}{\phi^I} \right) \, dx^\alpha $$
This means that the generalized Poisson bracket of two functions is represented in this theory by a one-form over $T^*M$. It arises more-or-less naturally in this geometric setting, with all the natural transformation properties of the generalized Poisson structure of \cite{struckmeier2008covariant}. However, it is important to note that this generalized Poisson bracket is not a true Poisson bracket, as it map pairs of functions to one-forms rather than functions, and does not satisfy the ordinary Jacobi identity.

It is worth noting that physically desirable generalized Poisson structures such as these are difficult or impossible to generate in many covariant Hamiltonian field theories \cite{forger2003poisson}. The existence of such a structure is therefore one of the major successes of the theory.

\section{Further Modifications} \label{modifications}
Though the geometric structures of sections \ref{extension} and \ref{poisson} hew very closely to G\"unther's original approach, the need to define a vertical projection operator on $TP$ in order to use eq.\ \eqref{hamiltonsfieldequationsgeometric} is a substantial challenge: it is often not obvious how best to define the correct vertical projection operator, and the result depends very much upon our choice. However, though equation eq.\ \eqref{hamiltonsfieldequationsgeometric} mirrors very closely G\"unther's original approach, it is not the only way to use the symplectic structure to produce meaingful field equations. In this section I will show how use physically meaningful connections on the spaces underlying the extended phase space $P$ to construct from any prospective solution section $\gamma : M \to P$ a vertically-valued section that can be fed directly into the symplectic structure of theorem \ref{symplectictensor}. 

This alternative construction is similar to the idea of letting each prospective solution section define its own vertical projection operator. If we are willing to slightly bend the prescription of eq.\ \eqref{hamiltonsfieldequationsgeometric}, we will find that an approach in which each section defines its own vertically-valued tensor that can be fed directly into the symplectic structure can be made to work. 

This approach can be broken down into three steps:
\begin{enumerate}
\item Define a connection on sections $\gamma : M \to P$
\item Vertically lift the gauge covariant derivative of a prospective solution section $\gamma : M \to P$ to $VP \otimes TM$
\item Contract this vertically lifted gauage covariant exterior derivative with the symplectic structure to produce the appropriate field equations
\end{enumerate}
None of these steps is standard, so I will discuss each one in order. 

First, we wish to define a connection on sections of $\epsilon \circ \pi : P \to M$. We will take as a given that there exists a Levi-Civita connection $\nabla_{TM}$ uniquely defined by the metric tensor $g$ naturally occuring in the definition of the Lagrangian density $\mathscr{L}$; see section \ref{applications} for examples. We will then also need to define connections $\nabla_E$ on sections of $\epsilon: E \to M$ and $\nabla_{V^*E}$ on $\epsilon \circ \zeta^* : V^*E \to M$ in order to have a well defined connection on $P$. (In fact, we will usually construct a connection $\nabla_{VE}$ -- which will naturally induce the connection $\nabla_{V^*E}$ -- from a connection on $E$ via the vertical lift.) First, let us consider the following: 
\begin{proposition} \label{liftedsection}
Every section $\gamma$ of $\epsilon \circ \zeta : VE \to M$ can be uniquely identified with a pair of sections $\gamma_1, \gamma_2 : M \to E$ via
$$ \gamma(x) = \text{vl}_{\gamma_1(x)} \gamma_2(x) $$
\end{proposition}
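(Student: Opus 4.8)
The plan is to exhibit the pair $(\gamma_1,\gamma_2)$ directly and then verify uniqueness, the guiding observation being that the vertical lift supplies a canonical, fiberwise-linear identification of $VE$ with the fibered product $E \times_M E$. First I would unwind the hypothesis: a section $\gamma$ of $\epsilon\circ\zeta : VE \to M$ is a smooth map $\gamma : M \to VE$ with $\epsilon\circ\zeta\circ\gamma = \text{id}_M$, where $\zeta : VE \to E$ is the bundle projection of the vertical bundle. Setting $\gamma_1 := \zeta\circ\gamma$ immediately yields $\epsilon\circ\gamma_1 = \text{id}_M$, so $\gamma_1$ is a section of $\epsilon : E \to M$, and by construction $\gamma(x) \in V_{\gamma_1(x)}E$ for every $x \in M$.

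Next I would invoke the standard structure of the vertical lift on a vector bundle: for each $e \in E$ the map $\text{vl}_e : E_{\epsilon(e)} \to V_e E$, $w \mapsto \frac{d}{dt}\big|_{t=0}(e + tw)$, is a linear isomorphism, and these assemble into a diffeomorphism $\text{vl} : E \times_M E \to VE$ that is fiberwise linear over $E$ in the second slot. Hence, for each $x$, there is a unique element $\gamma_2(x) := \text{vl}^{-1}_{\gamma_1(x)}\big(\gamma(x)\big) \in E_x$, and writing $\gamma_2 = \text{pr}_2 \circ \text{vl}^{-1} \circ \gamma$ displays $\gamma_2$ as a composite of smooth maps, hence a smooth section of $\epsilon : E \to M$; by construction $\gamma(x) = \text{vl}_{\gamma_1(x)}\gamma_2(x)$. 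For uniqueness I would apply $\zeta$ (equivalently $\text{pr}_1\circ\text{vl}^{-1}$) to both sides of any second decomposition to recover $\gamma_1$, and then use injectivity of each $\text{vl}_{\gamma_1(x)}$ to recover $\gamma_2$.

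The step I expect to be the main obstacle is justifying that $\text{vl}$ is a diffeomorphism with fiberwise-linear inverse, since this is precisely the point where one needs $E \to M$ to carry the extra structure (a vector bundle structure on the fibers $E_x$) that makes $\text{vl}_e$ defined and linear at all; in the fiber-bundle generality of section \ref{extension} this should be recorded as a standing assumption, and it does hold for the scalar and Maxwell cases treated later. To keep the argument transparent I would run it in local bundle coordinates $(x^\alpha,\phi^I)$ on $E$ with induced coordinates $(x^\alpha,\phi^I,u^I)$ on $VE$: there a section $\gamma$ is $\gamma(x) = u^I(x)\,\pa{}{\phi^I}$ based at $(x,\phi(x))$, which corresponds to $\gamma_1(x) = (x,\phi(x))$ and $\gamma_2(x) = (x,u(x))$, so existence, smoothness, and uniqueness are all immediate. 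I would present this coordinate computation as the verification and keep the coordinate-free description above as the conceptual content.
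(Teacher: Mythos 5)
Your proof is correct, and its verification step --- the coordinate computation identifying $\gamma_1(x) = (x,\phi(x))$ and $\gamma_2(x) = (x,u(x))$ from $\gamma(x) = u^I(x)\,\partial/\partial\phi^I$ at $(x,\phi(x))$ --- is exactly the paper's proof, which works entirely in local fibered coordinates and then asserts globality and uniqueness. Your invariant preamble goes somewhat further than the paper: defining $\gamma_1 = \zeta\circ\gamma$ and $\gamma_2 = \mathrm{pr}_2\circ\mathrm{vl}^{-1}\circ\gamma$ through the fiberwise-linear identification $\mathrm{vl} : E\times_M E \to VE$ actually establishes the smoothness and uniqueness that the paper leaves as assertions, and you are right to flag that this (like the paper's own appendix on vertical lifts) presupposes that $\epsilon : E \to M$ carries a vector bundle structure rather than being the general fiber bundle of section \ref{extension}.
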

\begin{proof}
In coordinates in which $\gamma = x^\alpha e_\alpha + \gamma^I e_I + \gamma^J \pa{}{\phi^J}$, $\gamma_1 =  x^\alpha e_\alpha + \gamma_1^I e_I $, and $\gamma_2 = x^\alpha e_\alpha + \gamma_2^I e_I $ we have
$$ \gamma =  x^\alpha e_\alpha + \gamma_1^I e_I  + \gamma_2^I \pa{}{\phi^I}$$
Since the sections $\gamma_1, \gamma_2$ are globally defined, this identification is also global. The fact that there is only one pair of sections $\gamma_1, \gamma_2$ that we can associate with a given section $\gamma$ shows that this identification is unique.
\end{proof}
This ability to uniquely associate sections of $VE$ with pairs of sections of $E$ allows us to lift connections on $E$ to connections on $VE$. Let us make this precise with the following
\begin{proposition} \label{liftedconnection}
Given a connection $\nabla$ on sections of $E$, we can define a connection $\nabla_{VE}$ on sections of $\epsilon \circ \zeta : VE \to M$ via
$$ \nabla_{VE,X} \gamma := \text{vl}_{\gamma_1} \nabla_{X} \gamma_2 $$
where $\gamma_1, \gamma_2$ are the sections of $E$ associated with $\gamma$ from proposition \ref{liftedsection} and the equality is to hold for all vector fields $X : M \to TM$.
\end{proposition}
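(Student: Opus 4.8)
The plan is to verify that the stated formula satisfies the defining axioms of a connection --- tensoriality in the vector-field argument $X$ and the Leibniz rule in the section argument --- and that it is unambiguous, so that $\nabla_{VE}$ is a bona fide operator on $\Gamma(M, VE)$. First I would record that the right-hand side rests on no arbitrary data: by Proposition \ref{liftedsection} the pair $(\gamma_1, \gamma_2)$ of sections of $E$ is uniquely determined by $\gamma$, and since $\nabla_X \gamma_2$ is again a section of $E$ lying over the same base points as $\gamma_1$, its vertical lift $\text{vl}_{\gamma_1}\nabla_X\gamma_2$ is a well-defined section of $\epsilon \circ \zeta : VE \to M$. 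In fibered coordinates, writing $\gamma = x^\alpha e_\alpha + \gamma_1^I e_I + \gamma_2^I\,\partial/\partial\phi^I$ as in Proposition \ref{liftedsection} and letting $\Gamma^I_{\alpha J}$ denote the coefficients of $\nabla$, the construction reads $\nabla_{VE,X}\gamma = x^\alpha e_\alpha + \gamma_1^I e_I + X^\alpha\big(\partial_\alpha \gamma_2^I + \Gamma^I_{\alpha J}\gamma_2^J\big)\,\partial/\partial\phi^I$; this exhibits the output explicitly and confirms that it patches together correctly across fibered charts.

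Next I would check the axioms, each of which reduces to the corresponding property of $\nabla$ on $E$ via the single observation that at every point $x$ the vertical lift $\text{vl}_{\gamma_1(x)} : E_x \to V_{\gamma_1(x)}E$ is a linear isomorphism not involving $X$. Tensoriality is then immediate: applying the linear map $\text{vl}_{\gamma_1}$ to $\nabla_{fX+Y}\gamma_2 = f\nabla_X\gamma_2 + \nabla_Y\gamma_2$ yields $\nabla_{VE,fX+Y}\gamma = f\nabla_{VE,X}\gamma + \nabla_{VE,Y}\gamma$. For the Leibniz rule the point to exploit is that rescaling $\gamma$ by $f \in C^\infty(M)$ rescales only the vertical part --- because $\text{vl}$ is linear in its second slot, $f\gamma = \text{vl}_{\gamma_1}(f\gamma_2)$, so $(f\gamma)_1 = \gamma_1$ and $(f\gamma)_2 = f\gamma_2$ --- whence $\nabla_{VE,X}(f\gamma) = \text{vl}_{\gamma_1}\nabla_X(f\gamma_2) = \text{vl}_{\gamma_1}\big(X(f)\gamma_2 + f\nabla_X\gamma_2\big) = X(f)\gamma + f\nabla_{VE,X}\gamma$, again using linearity of $\text{vl}_{\gamma_1}$. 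Additivity in $\gamma$ over sections sharing a common foot point $\gamma_1$ goes through identically, once $\gamma_2$-addition is pushed through $\text{vl}_{\gamma_1}$.

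The step I expect to require the most care is the bookkeeping around the foot point $\gamma_1$, which the formula carries along without differentiating: one must be precise about the sense in which $\nabla_{VE}$ is a connection, since $\epsilon \circ \zeta : VE \to M$ is a fiber bundle but not naturally a vector bundle over $M$. The clean way to phrase the invariant content is that, for each fixed $\gamma_1 : M \to E$, the vertical lift $\text{vl}_{\gamma_1}$ identifies the locus of sections of $VE \to M$ with foot point $\gamma_1$ with $\Gamma(M, \gamma_1^* E)$, and that under this identification $\nabla_{VE}$ is literally $\nabla$ pulled back along $\gamma_1$; the axioms then transfer with no further work. A pure coordinate check, as in the first paragraph, sidesteps having to say this, but it is worth one line so that the definition is not mistaken for a genuine linear connection on $VE$. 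Finally, I would note in passing that this same $\nabla_{VE}$ induces the dual connection $\nabla_{V^*E}$ on $\epsilon \circ \zeta^* : V^*E \to M$ by the usual pairing prescription, which is the object actually needed to build the connection on $P$.
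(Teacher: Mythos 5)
Your proposal is correct, and its first paragraph is essentially the paper's entire proof: the paper only verifies that $\text{vl}_{\gamma_1}\nabla_X\gamma_2$ is a legitimate, globally well-defined section of $\epsilon \circ \zeta : VE \to M$, by exhibiting exactly the coordinate expression you write down, with the connection potentials of $\nabla$ carried over unchanged. Everything beyond that is additional to what the paper does, and it is worthwhile: you check $C^\infty(M)$-linearity in $X$, the Leibniz rule, and additivity at a fixed foot point, all by pushing the corresponding properties of $\nabla$ through the pointwise linear isomorphism $\text{vl}_{\gamma_1(x)} : E_x \to V_{\gamma_1(x)}E$, so your argument substantiates the word ``connection'' in the statement rather than only the word ``section.'' Your closing caveat is also exactly the point the paper leaves implicit: because the foot point $\gamma_1$ is carried along undifferentiated, $\nabla_{VE}$ is not a linear connection for the vector bundle structure one could put on $VE \to M$ (for a vector bundle $E$ one has $VE \simeq E \oplus_M E$ over $M$, and the Leibniz rule then fails in the foot-point slot); the invariant content is precisely your formulation that, at fixed $\gamma_1$, the vertical lift identifies such sections with sections of $E$ (equivalently of $\gamma_1^*(VE \to E)$) and transports $\nabla$ across that identification. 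The only nitpick is the phrase ``not naturally a vector bundle over $M$'' --- for vector bundles it is, via $E \oplus_M E$ --- but your substantive claim, that the defined operator is not linear with respect to that structure and should be read through the fixed-foot-point identification, is right, and nothing in your verification depends on the phrasing.
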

\begin{proof}
The most importnat thing to check is that the result is in fact a section of $\epsilon \circ \zeta : VE \to M$. Looking at things in local fibered coordinates makes this clear:
$$ \text{vl}_{\gamma_1} \nabla_{X} \gamma_2 = x^\alpha e_\alpha + \gamma^I_1(x) e_I + X^\alpha \left( \pa{\gamma^I_2(x)}{x^\alpha} + A^I_{J \alpha} \gamma^J_2(x) \right) \pa{}{\phi^I} $$
which is a legitimate section of the bundle in question, globally well defined when the $A$s are the connection potentials associated with the original connection $\nabla$.
\end{proof} In coordinates, this procedure amounts to defining a new connection on $VE$ with the same connection potentials as the starting connection $\nabla$ on $E$. It will be useful to note that this connection on $VE$ naturally induces a connection $\nabla_{V^*E}$ on $\epsilon \circ \zeta^* : V^*E \to M$ via a standard construction through the requirement that 
$$ \nabla(\gamma \gamma^*) =: \gamma^* \nabla_{VE} \gamma + \gamma \nabla_{V^*E} \gamma^* = d (\gamma \gamma^*) $$
for all sections $\gamma : M \to VE$ and $\gamma^*: M \to V^*E$ at all points $x \in M$. The result is that the connection potentials of $\nabla_{V^*E}$ differ from those of $\nabla_{VE}$ by a minus sign. \\

With a few modifications, this framework can be extended to the situation of interest in this paper. First, in the spirit of proposition \ref{liftedsection}, an arbitrary section $\gamma$ of $\epsilon \circ \pi : P \to M$ can be identified with two sections $\gamma_1 : M \to E$ and $\gamma_2: M \to V^*E$ that are compatible in the sense that $\gamma_2(x) \in V^*_{\gamma_1(x)} E$ for all $x \in M$, as well as a vector field $X : M \to TM$:
$$ \gamma(x) = \gamma_1(x) + \gamma_2(x) \otimes X(x) $$
The identification is no longer unique because of the tensor product in the second term, but all results are indepedent of the particular way in which we choose $\gamma_2$ and $X$. 

Similarly, in the spirit of proposition \ref{liftedconnection}, a connection $\nabla_P$ on the space of such sections can be constructed from connections $\nabla_1$ on $E$, $\nabla_2$ on $V^*E$ (potentially -- but not necessarily -- the dual of the one on $VE$ naturally associated with $\nabla_1$), and $\nabla_{TM}$ on $TM$ (invariably the appropriate Levi-Civita connection) via
\begin{equation} \nabla_{P,Y} (\gamma) := \nabla_{1,Y} \gamma_1 + ( \nabla_2 \otimes \nabla_{TM} )_Y ( \gamma_2 \otimes X) \label{connectiononP} \end{equation}
where $\nabla_2 \otimes \nabla_{TM}$ is the standard connection induced on the tensor product space $V^*E \otimes_M TM$ by the connections $\nabla_2$ and $\nabla_{TM}$, and $\{\gamma_1, \gamma_2, X\}$ are three appropriate sections identified with $\gamma$ via the construction of the preceeding paragraph. Note that all three sections are subject to covariant differentiation, unlike in proposition \ref{liftedconnection}, but the result is still a globally well-defined section of the appropriate bundle. In coordinates in which $\gamma = (x^\alpha , \gamma^I_1 , \gamma_{I,2} X^\alpha) $ this amounts to
$$\nabla_{P,Y} (\gamma) = Y^\alpha \left( \pa{\gamma^I_1}{x^\alpha} + \gamma^J A^I_{J \alpha,1} \right) e_I + Y^\alpha \left( \pa{(\gamma_{I,2} X^\beta) }{x^\alpha} + X^\beta \gamma_{J,2} A^J_{I \alpha,2} + \gamma_{I,2} X^\delta \Gamma^\beta_{\delta \alpha} \right) d \phi^I \otimes \pa{}{x^\beta}$$
where the $A_1$ are the connection potentials of $\nabla_1$, the $A_2$ are the connections potentials of $\nabla_2$, and the $\Gamma$ are the potentials of the Levi-Civita connection. 

The resulting connection is naturally a map $\nabla_P : \Gamma(M,TM) \times \Gamma(M,P) \to \Gamma(M,P)$. It can be vertically lifted exactly as in proposition \ref{liftedconnection} to produce a new map $\nabla_{VP} : \Gamma(M,TM) \times \Gamma(M,VP) \to \Gamma(M,VP)$. Using exactly the construction of proposition \ref{liftedsection}, we can identify a section $\gamma : M \to VP$ with two sections $\gamma_1, \gamma_2 : M \to P$ via 
$$\gamma(x) = \text{vl}_{\gamma_1(x)} \gamma_2(x)$$
so that we can construct the new connection $\nabla_{VP}$ by
$$\nabla_{VP,X}(\gamma) := \text{vl}_{\gamma_1(x)} (\nabla_P \gamma_2)$$
In coordinates, this is a new connection on $VP$ with the same potentials as $\nabla_P$. 

The key obsevation now is that the image of $\nabla_{VP}$ lies in $VP \otimes T^*M$, putting it squarely in the domain of the symplectic tensor $\omega$ without the need to define any vertical projection operator at all. In other words, if we are willing to stretch a bit the expectations from \cite{gunther1987polysymplectic} that led to eq.\ \eqref{hamiltonsfieldequationsgeometric}, we can consider solution sections $\gamma : M \to P$ to be those which obey 
\begin{equation} \omega \left( \nabla_{VP} ( \text{vl}_\gamma \gamma), v \right) = dH(v) \label{alternativehamiltons} \end{equation}
for all vertical vector fields $v : P \to VP$, with $\nabla_{VP}$ as constructed above and $H$ an appropriate Hamiltonian function. In coordinates in which $\gamma = (x^\alpha , \phi^I = \gamma^I_1 , \pi^\alpha_I = \gamma_{I,2} X^\alpha) $, this is equivalent to
\begin{equation} \left( \pa{\phi^I}{x^\alpha} + \phi^J A^I_{J \alpha,1} \right) d\pi^\alpha_I + \left( \pa{\pi^\alpha_I }{x^\alpha} + \pi^\alpha_J A^J_{I \alpha,2} + \pi^\alpha_{I} \Gamma^\beta_{\alpha \beta} \right) d \phi^I = \pa{H}{\phi^I} d\phi^I + \pa{H}{\pi^\alpha_I} d\pi^\alpha_I \end{equation}
With judicious choices of Hamiltonians and connections, we will find this prescription to be physically correct below. 

\section{Applications} \label{applications}

Note that Hamilton's equations never constrain all the derivatives $\pa{\gamma^\alpha_I}{x^\beta}$, only a particular sum of them (as is necessary to avoid over-determining the solution sections). Taken together, they serve to identify physically realizable field configurations (and their conjugate momenta) from un-physical configurations. The first of Hamilton's equations usually reiterates the relationship between the conjugate momentum coordinates and the derivatives of the field-value part of the solution section \footnote{In the case that one begins with the Hamiltonian $H$ directly (that is, without defining a Lagragian density, defining the conjugate momenta, or performing a Legendre transformation), this first equation serves to \textbf{define} the conjugate momenta in terms of derivatives of the field-value part of the solution section.}, while the second mimics the Euler-Lagrange equation of motion of the system when the constraints imposed by the first equation are satisfied. The specific field configuration taken by a physical system is then determined by imposing appropriate initial conditions and solving the system of partial differential equations. \cite{struckmeier2008covariant} provides the details of how these field equations apply to a wide range of examples of physical interest. As an indication of how the abstract geometric framework described above pertains to specific physical fields -- and in particular how the choices needed to use eq.\ \eqref{alternativehamiltons} are to be made -- I consider a few simple examples here. Perhaps more importantly, I will show that the approach of section \ref{modifications} allows us to side-step the complications of adding a vertical projection to the ordinary covariant Hamiltonian approach.

\subsection{Scalar Field Theory} \label{scalarfield}
The first and simplest example of a classical field is a single, real-valued (i.e., uncharged) scalar field. I will now examine how this simplest case fits into the framework of the previous sections.

In the most basic case, the extended configuration space for the scalar field is $E = \mathbb{R}^4 \times \mathbb{R}$. It carries global fibered coordinates $\{ x^\alpha, \phi \}$ such that a point $e \in E$ is given by $e = x^\alpha e_\alpha + \phi \ e_\phi$. The appropriate extended phases space is then $P := V^*E \otimes_E TM = \mathbb{R}^4 \times \mathbb{R} \times (\mathbb{R} \otimes \mathbb{R}^4)$ with global fibered coordinates $\{ x^\alpha, \phi, \pi^\alpha \}$ such that a point $p \in P$ is given by $e =  x^\alpha e_\alpha + \phi \ e_\phi + \pi^\alpha d \phi \otimes \pa{}{x^\alpha}$. The key symplectic structures are given in these global fibered coordinates by $\theta = \pi^\alpha d \phi \otimes \pa{}{x^\alpha}$, $\omega = d\pi^\alpha \wedge d \phi \otimes \pa{}{x^\alpha}$, and (though not necessary for the present analysis) $\Pi = - \pa{}{\phi} \wedge \pa{}{\pi^\alpha} \otimes dx^\alpha$. Note that I have made use of the natural (flat) vertical projection in defining $\theta$, and will make use of it again below to derive Hamilton's field equations. Given the standard Klein-Gordon Hamiltonian $H = \frac{1}{2} \eta_{\alpha \beta} \pi^\alpha \pi^\beta + \frac{1}{2} m^2 \phi^2$ (where $\eta$ is the Minkowski metric in the ``mostly-minus" or ``West coast" metric signature convention) and the fact that there is a natural (flat) vertical projection on $P$, eq.\ \eqref{hamiltonsfieldequationsgeometric} implies that physically realizable sections $\gamma$ must obey
$$ \pa{\phi}{x^\alpha} = \pa{H}{\pi^\alpha} = \eta_{\alpha \beta} \pi^\beta $$
and
$$ \pa{\pi^\alpha}{x^\alpha} = - \pa{H}{\phi} = - m^2 \phi $$
Solving the first equation for $\pi^\alpha$, noting that $\pa{\eta^{\alpha \beta}}{x^\gamma} = 0$, and substituting into the second gives
\begin{equation} \label{kgeflat} \eta^{\alpha \beta} \pa{^2 \phi}{x^\alpha x^\beta} + m^2 \phi = 0 \end{equation}
which is the standard, flat space Klein-Gordon equation. \\

As a step toward more interesting results, consider a single, real-valued scalar field over a flat space-time (so that is still $E = \mathbb{R}^4 \times \mathbb{R}$), but with a non-flat metric; this is the situation usually encountered in general relativity. In this case, the flat vertical projection is still available, but it is important to be careful about deriving the correct Hamiltonian from the Lagrangian formulation. If we think of the classical action as $S = \int \mathscr{L} \ d^4 x$, picking off $\mathscr{L} = \frac{\sqrt {- \det g}}{2} (g^{\alpha \beta} \pa{\phi}{x^\alpha} \pa{\phi}{x^\beta} - m^2 \phi^2)$ as everything in the integrand except the top form $d^4 x$, then the appropriate covariant conjugate momenta are
$$ p^\alpha := \pa{\mathscr{L}}{\pa{\phi}{x^\alpha}} = \sqrt {- \det g} \ g^{\alpha \beta} \ \pa{\phi}{x^\beta} $$
so that 
$$H = p^\alpha \pa{\phi}{x^\alpha} - \mathscr{L} = \frac{1}{2 \sqrt{- \det g} } g_{\alpha \beta} p^\alpha p^\beta +  \frac{ \sqrt{- \det g}}{2} m^2 \phi^2 $$
and Hamilton's field equations read
$$ \pa{\phi}{x^\alpha} = \pa{H}{p^\alpha} =\frac{1}{\sqrt{- \det g}} \ g_{\alpha \beta} p^\beta$$
and
$$ \pa{p^\alpha}{x^\alpha} = - \pa{H}{\phi} = - {\sqrt{- \det g} \ m^2 \phi} $$
Solving the first equation for $p^\alpha$ and substituting it into the second give
\begin{equation} 0 = m^2 \phi + \frac{1}{\sqrt{- \det g} } \pa{}{x^\alpha} ( \sqrt{- \det g} \ g^{\alpha \beta} \pa{\phi}{x^\beta}) = g^{\alpha \beta} \nabla_\alpha \nabla_\beta \phi + m^2 \phi \label{kge} \end{equation}
which is indeed the appropriate equation of motion for a single scalar field in a curved space-time. \\

But what happens when the underlying space-time is not $\mathbb{R}^4$ and one cannot access the flat vertical projection? In this formalism there are two options. One is to begin with the same Lagrangian density as in the previous case, perform the appropriate Legendre transformation, choose an arbitrary vertical projection on the underlying space to find Hamilton's equations, then finally alter the Hamiltonian function as noted just after eq.\ \eqref{hamiltonsfieldequationsgeometric} of section \ref{extension}. However, as the alternative outlined at the end of section \ref{extension} is more elegant, naturally global, and manifestly coordinate invariant, I will outline that method instead.

To determine the appropriate Hamiltonian $H$, let us begin again with the action, but in this case let us write it as:
$$ S = \int \text{vol} \ \mathscr{L} = \int \sqrt{- \det g} \ d^4 x \ \left( \frac{1}{2} g^{\mu \nu} \pa{\phi}{x^\mu} \pa{\phi}{x^\nu} - \frac{1}{2} m^2 \phi^2 \right)$$
Rather than use the ordinary separation of $\mathscr{L}$ from $d^4 x$ used successfully above, let us consider what happens when we single out the Lagrangian differently, being careful to isolate $\mathscr{L}$ not just from the top form $d^4 x$ but from the volume form $\text{vol} = \sqrt{- \det g} \ d^4 x $. This will have important ramifications later. 

The covariant momenta are then defined as the appropriate partial derivatives of the Lagrangian function $\mathscr{L}$:
$$ p^\mu := \pa{\mathscr{L}}{\pa{\phi}{x^\mu}} = g^{\mu \nu} \pa{\phi}{x^\nu}$$
so that the standard construction for the Hamiltonian gives
$$  H := p^\mu \pa{\phi}{x^\mu} - \mathscr{L} =  \frac{1}{2} g_{\mu \nu} p^\mu p^\nu + \frac{1}{2} m^2 \phi^2$$
Given local coordinates in which
$$ p \in P = x^\mu e_\mu + \phi e_\phi + p^\mu d \phi \otimes \pa{}{x^\mu} $$
we have
$$ \omega = d \phi \wedge d p^\mu \otimes \pa{}{x^\mu} $$
To make use of eq.\ \eqref{alternativehamiltons}, we shall use the natural connection $d$ on $E$ and $V^*E$ and the Levi-Civita connection $\nabla$ on $TM$ to define the connection $\nabla_P$ which in turn defines $\nabla_{VP}$. In coordinates in which we have a section $\gamma : M \to P \mid x^\mu e_\mu \mapsto \phi(x^\mu) e_\phi + p^\mu(x^\mu) d \phi \otimes \pa{}{x^\mu}$, we have
$$ \text{vl}_\gamma(\gamma) = x^\mu e_\mu + \phi e_\phi + p^\mu d \phi \otimes \pa{}{x^\mu} + \phi \pa{}{\phi} + p^\mu d \phi \otimes \pa{}{p^\mu} $$
so that
$$ \nabla_{VP} ( \text{vl}_\gamma(\gamma) ) = \left( \pa{\phi}{x^\mu} + 0 \right) \pa{}{\phi} \otimes dx^\mu + \left(\pa{p^\nu}{x^\mu} + \gamma^\xi \Gamma^\nu_{\xi \mu}  \right) \pa{}{p^\nu} \otimes dx^\mu $$
Eq.\ \eqref{alternativehamiltons} then gives
$$ \left( \pa{\phi}{x^\mu} + 0  \right) dp^\mu -  \left(\pa{p^\mu}{x^\mu} + \gamma^\xi \Gamma^\mu_{\xi \mu}  \right) d \phi = g_{\mu \nu} p^\nu \ d p^\mu + m^2 \phi \ d\phi $$
which is equivalent to 
$$ p^\mu = g^{\mu \nu} \pa{\phi}{x^\nu} = g^{\mu \nu} \nabla_\nu \phi $$
and
$$ \nabla_\mu p^\mu = - m^2 \phi $$
where the $\nabla$ is the Levi-Civita connection.
Putting these two together and remembering that the Levi-Civita connection is metric comptabible yields
$$g^{\mu \nu} \nabla_\mu \nabla_\nu \phi + m^2 \phi = 0$$
which is the curved space-time Klein-Gordon equation.

This is certainly the right result, but it is important to understand where it has come from. By isolating the Lagrangian from the entire volume form, we have caused the Lagrangian (and therefore the Hamiltonian also) to ``forget" about the (potentially) non-flat nature of the underlying space-time. The correction from a non-flat space-time then comes from the vertical-valued tensor used to give us our Hamiltonian equations of motion in eq.\ \eqref{alternativehamiltons}. If we had singled out the Lagrangian as everything expect the top form $d^4 x$ instead, we would have over-corrected for the underlying space-time. We will see that a similar procedure works equally well for electrodynamics, too.

\subsection{Electromagnetism} \label{maxwellfield}
In the case of electromagnetism, the most basic extended configuration space is $T^* \mathbb{R}^4 \simeq \mathbb{R}^4 \times \mathbb{R}^4$, so that $P$ has global fibered coordinates $\{ x^\alpha, A_\alpha, p^{\alpha \beta} \}$ and a point $p \in P $ is given by $p = x^\alpha e_\alpha + A_\alpha dx^\alpha + p^{\alpha \beta} dA_\alpha \otimes \pa{}{x^\beta} $. Given the Lagrangian density $\mathscr{L} = - \frac{1}{4}g^{\alpha \gamma} g^{\beta \delta} F_{\alpha \beta} F_{\gamma \delta} - g^{\alpha \beta} A_\alpha J_\beta$ (where $F_{\alpha \beta} := \pa{A_\beta}{x^\alpha} - \pa{A_\alpha}{x^\beta}$) care must be taken in how one proceeds in order to successfully carry out the covariant Legendre transformation and recover the correct equations of motion; see \cite{struckmeier2008covariant}. One finds that
$$ p^{\alpha \beta} := \pa{ \mathscr{L}}{ \pa{A_\alpha}{x^\beta}} = g^{\alpha \gamma} g^{\beta \delta} F_{\gamma \delta}$$
and so 
$$ H = -\frac{1}{4} g_{\alpha \gamma} g_{\beta \delta} p^{\alpha \beta} p^{\gamma \delta} + g^{\alpha \beta} A_\alpha J_\beta $$
Assuming a flat vertical projector, eq.\ \eqref{localhfe1} gives:
$$ \pa{A_\alpha}{x^\beta} = \pa{H}{p^{\alpha \beta}} = - \frac{1}{2} g_{\alpha \gamma} g_{\beta \delta} p^{\gamma \delta} = - \frac{1}{2} F_{\alpha \beta} = + \frac{1}{2} F_{\beta \alpha} $$
so that we have
$$ F_{\alpha \beta} = \pa{A_\beta}{x^\alpha} - \pa{A_\alpha}{x^\beta}$$
while eq.\ \eqref{localhfe2} gives:
$$ \pa{p^{\alpha \beta}}{x^\alpha} = \pa{F^{\alpha \beta}}{x^\alpha} = g^{\alpha \beta}J_{\beta}$$
which is the usual inhomogeneous Maxwell equation. \\

However, the alternative approach of section \ref{modifications} can be applied here, too, in order to extend this result to the non-flat case. Again, the presence of the background metric $g_{\alpha \beta}$ in the definition of the Maxwell Lagrangian density gives us access to the unique Levi-Civita connection $\nabla$ compatible with $g$; we shall use this connection to define the appropriate vertically-valued tensor in eq.\ \eqref{alternativehamiltons}. 

As before, we must isolate $\mathscr{L}$ from the entire volume form, writing
$$ S = \int \text{vol} \ \mathscr{L} = \int \sqrt{- \det g} \ d^4 x \ \left( - \frac{1}{4} g^{\mu \rho} g^{\nu \sigma} F_{\mu \nu} F_{\rho \sigma} - g^{\mu \nu} A_\mu J_\nu \right) $$
where as before we have
$$ F_{\mu \nu} := \pa{A_\nu}{x^\mu} - \pa{A_\mu}{x^\nu} $$
Then we have
$$ p^{\mu \nu} := \pa{ \mathscr{L}}{ \pa{A_\mu}{x^\nu}} = g^{\mu \sigma} g^{\nu \rho} F_{\mu \nu} $$
so that
\begin{multline*} H := p^{\mu \nu} \pa{A_\nu}{x^\mu} - \mathscr{L} =  \frac{1}{2} p^{\mu \nu} (\pa{A_\nu}{x^\mu} - \pa{A_\mu}{x^\nu}) - \frac{1}{4} g^{\mu \rho} g^{\nu \sigma} F_{\mu \nu} F_{\rho \sigma} + g^{\mu \nu} A_\mu J_\nu \\
 = - \frac{1}{4} p^{\mu \nu} F_{\mu \nu} + g^{\mu \nu} A_\mu J_\nu = - \frac{1}{4} g_{\mu \sigma} g_{\nu \rho} p^{\mu \nu} p^{\sigma \rho} + g^{\mu \nu} A_\mu J_\nu
\end{multline*}
Given local fibered coordinates in which
$$ p \in P = x^\mu e_\mu + A_\mu dx^\mu + p^{\mu \nu} dA_\mu \otimes \pa{}{x^\nu} $$
we have
$$ \omega = dA_\mu \wedge dp^{\mu \nu} \otimes \pa{}{x^\nu} $$
To make use of eq.\ \eqref{alternativehamiltons}, we shall anti-symmetrize the Levi-Civita connection $\nabla$ on $E$ (making it indistinguishable from the exterior derivative except for a factor of $\frac{1}{2}$, since the Levi-Civita connection is torsion free), vertically lift the Levi-Civita connection to $VE$ as in proposition \ref{liftedconnection} and use its natural dual connection on $V^*E$, and we shall use the Levi-Civita connection itself $\nabla$ on $TM$ to define the connection $\nabla_P$ which in turn defines $\nabla_{VP}$. In coordinates in which $\gamma : M \to P \mid x^\mu e_\mu \mapsto A_\mu (x^\mu) dx^\mu + p^{\mu \nu} (x^\mu) d A_\mu \otimes \pa{}{x^\mu}$, we have
$$\nabla_P \gamma = \frac{1}{2} \left(\pa{A_\mu}{x^\nu} - \pa{A_\nu}{x^\mu} \right) dx^\nu \otimes dx^\mu + \left( \pa{p^{\mu \nu}}{x^\xi} + p^{\rho \nu} \Gamma^\mu_{\rho \xi} + p^{\mu \rho} \Gamma^\nu_{\rho \xi} \right) dx^\xi \otimes dA_\mu \otimes \pa{}{x^\nu}$$
and
$$ \text{vl}_\gamma(\gamma) = x^\mu e_\mu + A_\mu dx^\mu + p^{\mu \nu} d A_\mu \otimes \pa{}{x^\mu} + A_\mu \pa{}{A_\mu} + p^{\mu \nu} \pa{}{p^{\mu \nu}} $$
so that
$$ \nabla_{VP} ( \text{vl}_\gamma(\gamma) ) = \frac{1}{2} \left( \pa{A_\mu}{x^\nu} - \pa{A_\nu}{x^\mu} \right) dx^\nu \otimes \pa{}{A_\mu} + \left( \pa{p^{\mu \nu}}{x^\xi} + p^{\rho \nu} \Gamma^\mu_{\rho \xi} + p^{\mu \rho} \Gamma^\nu_{\rho \xi} \right) dx^\xi \otimes \pa{}{p^{\mu \nu}} $$
Eq.\ \eqref{alternativehamiltons} then gives
$$ \frac{1}{2} \left( \pa{A_\mu}{x^\nu} - \pa{A_\nu}{x^\mu} \right) dp^{\mu \nu} - \left( \pa{p^{\mu \nu}}{x^\nu} + p^{\xi \nu} \Gamma^\mu_{\xi \nu} + p^{\mu \xi} \Gamma^\nu_{\xi \nu} \right) dA_\mu = - \frac{1}{2} g_{\mu \sigma} g_{\nu \rho} p^{\mu \nu} dp^{\sigma \rho} + g^{\mu \nu} J_\nu dA_\mu $$
which is equivalent to
$$ g^{\mu \sigma} g^{\nu \rho} \left( \pa{A_\rho}{x^\sigma} - \pa{A_\sigma}{x^\rho} \right) = p^{\mu \nu} = F^{\mu \nu} $$
and
$$ - \left( \pa{p^{\mu \nu}}{x^\nu} + p^{\xi \nu} \Gamma^\mu_{\xi \nu} + p^{\mu \xi} \Gamma^\nu_{\xi \nu} \right) = \nabla_\nu F^{\nu \mu} = g^{\mu \nu} J^\nu  $$
The first is the usual definition of the Maxwell tensor, while the second is the correct inhomogeneous Maxwell equation in curved space-time. 

It is worth pointing out explicitly that this construction is slightly less natural than the one in \ref{scalarfield}: we only recover the correct equations of motion if we anti-symmetrize the Levi-Civita connection to produce the connection $\nabla_E$ used in the construction of $\nabla_P$; there were no such complications in the case of the scalar field. This seems to indicate that there is something of an art to this approach, something which would be nice to avoid. Exactly how much additional challenge this represents in other cases is an interesting question for future research.

\subsection{General Relativity?}
Every covariant Hamiltonian formulation seems to struggle with the case of general relativity \cite{gotay1998momentum}. Fundamentally, this comes from the fact that the Legendre transformation of the Einstein-Hilbert Lagrangian density (considered as a function of the metric $g$) does not access all the information encoded in that function, as the Lagrangian density contains second derivatives of the metric which the Hamiltonian theory never becomes ``aware" of. Many attempts have been made to solve this problem, for instance by introducing higher order analogs of the Legendre transformation \cite{magnano1990legendre} or by using a different mechanism to produce a covariant Hamiltonian \cite{rovelli2006note} \cite{mcclain2018some}. To my knowledge, none of these attempts has been completely successful. Though it is disappointing, it should therefore be no great surprise that the Legendre transformation fails in this formalism, too. \\

Beginning with the standard Einstein-Hilbert action but identifying $\mathscr{L}$ as before gives
$$ S_{EH} = \int \text{vol} \ \mathscr{L} = \frac{c^4}{16 \pi G} \int \frac{d^4 x}{\sqrt{- \det g}} \ R $$
To evaluate the covariant momenta, we need to recall that, in order of increasing complexity
$$R = g^{\alpha \beta} R_{\alpha \beta}$$
$$ R_{\alpha \beta} = R^{\gamma}_{\alpha \gamma \beta} $$
$$ R^\alpha_{\beta \gamma \delta} = \pa{\Gamma^\alpha_{\beta \delta}}{x^\gamma} - \pa{\Gamma^\alpha_{\beta \gamma}}{x^\delta} + \Gamma^\alpha_{\epsilon \gamma} \Gamma^\epsilon_{\beta \delta} - \Gamma^\epsilon_{\beta \gamma} \Gamma^{\alpha}_{\epsilon \delta}$$
$$ \Gamma^\alpha_{\beta \gamma} = g^{\alpha \epsilon} \Gamma_{\epsilon \beta \gamma} = \frac{1}{2} g^{\alpha \epsilon} \left( - \pa{g_{\beta \gamma}}{x^\epsilon} + \pa{g_{\epsilon \beta}}{x^\gamma} + \pa{g_{\gamma \epsilon}}{x^\beta} \right)$$
Remembering that this is not a variational problem, we will need the identity
$$ \pa{g^{\alpha \beta}}{x^\gamma} = - g^{\alpha \delta} g^{\beta \epsilon} \pa{g_{\delta \epsilon}}{x^\gamma} $$
Together will a great deal of index gymnastics, this chain of equalities gives
\begin{multline}
\pi^{\zeta \eta \theta} := \pa{\mathscr{L}}{\pa{g_{\zeta \eta}}{x^\theta}} = \frac{c^4}{16 \pi G} \pa{R}{\pa{g_{\zeta \eta}}{x^\theta}} = \frac{c^4}{16 \pi G } g^{\beta \delta} \delta^\gamma_\alpha \pa{R^\alpha_{\beta \gamma \delta}}{\pa{g_{\zeta \eta}}{x^\theta}} \\
= \frac{c^4}{32 \pi G} \pa{g_{\alpha \beta}}{x^\gamma} \times \bigg(
- 4 g^{\alpha \eta } g^{\beta \gamma} g^{\zeta \theta} 
+ 2 g^{\alpha \zeta } g^{\beta \eta } g^{\gamma \theta}
- 2 g^{\alpha \beta } g^{\gamma \theta} g^{\zeta \eta }
+ g^{\alpha \beta } g^{\gamma \eta } g^{\zeta \theta} 
+ g^{\alpha \gamma} g^{\beta \theta} g^{\zeta \eta } 
 \bigg)
\end{multline}
Unfortunately, this is not invertible to get $\pa{g_{\alpha \beta}}{x^\gamma}$ in terms of $\pi^{\zeta \eta \theta}$. Even if it were, it would not give us access to the second derivatives $\pa{^2 g_{\alpha \beta}}{x^\gamma \partial x^\delta}$ that appear in the $\pa{\Gamma^\alpha_{\beta \delta}}{x^\gamma}$ terms in the Einstein-Hilbert action. The Legendre transformation fails. \\

In light of the aspirational perspective I mentioned in the introduction, this is a serious problem: one of the primary reasons to try to formulate a geometric approach to the quantization of classical fields is to feed in general relativity with the hope that the results will be better defined than those of the canonical approach. However, for this to be reasonable it is at minimum necessary that the classical theory be well defined within the framework used for quantization! The special case of general relativity is therefore an important area for future work within the field of polysymplectic Hamiltonian field theory.

\section{Conclusions}

In this paper, I have presented an extension of the polysymplectic formulation of covariant Hamiltonian field theory based on vertical projection operators that reproduces many of the key results of \cite{struckmeier2008covariant} -- most importantly Hamilton's field equations, the form of the generalized Poisson structure, and the covariant Hamiltonian -- in a more-or-less natural manner. In addition, the path I have taken generalizes G\"unther's approach in \cite{gunther1987polysymplectic} to the global case in a manner that is purely geometric and that seems to me much more in keeping with G\"unther's original approach and spirit than most recent work. This accounts for two of the three motivations I outlined in the introduction to justify this enterprise. 

The third motivation I offered for this particular treatment of covariant Hamiltonian field theory was that it might make possible a novel approach to the geometric quantization of classical field theories. Though I have not yet successfully used this framework to reproduce all the results of even scalar quantum field theory, there are a few promising points worth mentioning. First, it is a relatively simple matter to reproduce the canonical commutation relations for standard \emph{particle} theories using the geometric structures of this paper; indeed, the analysis can be successfully extended substantially farther, depending upon exactly which desirable properties one wishes the quantization procedure to reproduce \cite{mcclain2018quantization}. (That it will never produce all the properties one might reasonably expect for all smooth functions on the phase space is a result of Groenewald's theorem \cite{groenewold1946principles}.) On the field theory side, this procedure does not produce the canonical commutation relations. However, it does naturally reproduce the results of the canonical commutation relations \emph{after integration}; see \cite{mcclain2018some} for these preliminary results in the context of Minkowski space-time. This is a more reasonable expectation for a finite dimensional, differential geometric analysis of quantum field theory, as it is not clear how operator-valued distributions would ever arise from a well-defined differential geometric structure (see, for example, \cite{kanatchikov1998toward} and \cite{kanatchikov2001precanonical} for similar efforts, and \cite{blacker2019quantization} for a more closely related, though very mathematical, perspective). This seems a promising beginning, and a full analysis of the possibilities for geometric quantization from the perspective of this particular covariant Hamiltonian framework will be an important area for future work. It will be particularly interesting to see what role (if any) the vertical projection operator used to define Hamilton's field equations plays in defining the quantum theory in curved space-times.

\appendix

\section{Vertical projection operators}

Since vertical projection operators play such a major role in the construction of this version of polysymplectic covariant Hamiltonian field theory -- see, for example, definition \ref{tautologicaltensordefinition} and eq.\ \eqref{hamiltonsfieldequationsgeometric} -- it is worth briefly reviewing their properties for those unfamiliar with them.

Formally, a vertical projection operator on a fiber bundle $E$ over a base manifold $M$ is defined as a linear map $V : TE \to TE$ such that the following properties hold:

\begin{enumerate}
\item $ \text{Im} V = VE $
\item $ V^2 = V $
\end{enumerate}

More intuitively, the first property means both that the vertical projection operator must project onto the vertical space at each point and that it hit every element in each vertical space, while the second property means that the vertical projection operator is indeed a projection operator: once it has been applied once, that's all there is to it. \\

Since this paper deals exclusively with vector bundles, let us consider the specialized case of a vertical projection operator on a vector bundle more carefully. Let us choose coordinates $\{ x^\alpha, \phi^I \}$ for $E$ such that an arbitrary tangent vector $u \in T_e E$ can be represented as $u = u^\alpha \pa{}{x^\alpha} + u^I \pa{}{\phi^I}$. Then any endomorphism $E : TE \to TE$ can be written locally as
$$ E = E^\alpha_\beta(e) \pa{}{x^\alpha} \otimes dx^\beta + E^\alpha_I(e) \pa{}{x^\alpha} \otimes d\phi^I + E^I_\alpha(e) \pa{}{\phi^I} \otimes dx^\alpha + E^I_J(e) \pa{}{\phi^I} \otimes d \phi^J $$
If these local coordinates are adapted so that $v \in V_e E = v^I \pa{}{\phi^I}$, then the first criteria for vertical projection operators implies that we must have $E^\alpha_\beta = E^\alpha_I = 0$ if $E$ is to represent a vertical projection operator. Calling our prospective vertical projection operator $V$ instead of $E$, this means that we have
$$ V = V^I_\alpha(e) \pa{}{\phi^I} \otimes dx^\alpha + V^I_J(e) \pa{}{\phi^I} \otimes d \phi^J$$
The second criteria now implies that we must have
$$ V^2 = \pa{}{\phi^I} \otimes (V^I_J V^J_i dx^i + V^I_J V^J_K d \phi^K) $$
This is satisfied whenever $V^I_J = \delta^I_J$. In this case, we have
$$ V = \pa{}{\phi^I} \otimes d \phi^I + V^I_i \pa{}{\phi^I} \otimes dx^i $$
Specifying the arbitrary coefficients $V^I_i$ therefore specifies a vertical projection operator in adapted local coordinates. The case dealt with in the main article is similar, just a larger vertical space $VP$ in place of $VE$. In the local canonical coordinates used throughout most of the main article, this means that a vertical projection operator $V$ looks like
$$ V = \pa{}{\phi^I} \otimes (V^I_i dx^i + d \phi^I)  + \pa{}{\pi^i_I} \otimes (V^i_{I \, j} dx^j + d \pi^i_I)$$

\section{Vertical lifts on vector bundles} \label{verticallifts}

There is a standard construction for mapping two vectors $u, v$ in a fiber $E_x$ of a vector bundle $E$ to a vector $w$ in the vertical bundle $VE$. In coordinate independent terms, the desired vector $w$ is the tangent vector such that, for all functions $f : E \to \mathbb{R}$ we have
$$ w(f) := \frac{d}{ds} f(u + sv)$$
This tangent vector naturally lives in the space $V_e E \subset T_e E$, where the point $e \in E$ is given in terms of the vector $u$ by $e = (x, u) $. \\

Applying this construction to a single section $\gamma : M \to E$, with each $\gamma(x)$ serving the roles of both $u$ and $v$ gives
$$\text{vl} : \Gamma(M,E) \to \Gamma(M,VE) \mid \gamma(x) \mapsto \text{vl}_{\gamma(x)} \gamma(x)$$
In coordinates in which $\gamma(x) = x^\alpha e_\alpha + \phi^I(x) e_I $, this gives
$$ \text{vl}_\gamma \gamma = \phi^I(x) \pa{}{\phi^I}\bigg\rvert_{\gamma(x)} $$

\bibliography{sscft}

\end{document}